\documentclass[12pt]{article}

\usepackage{latexsym,amsmath,amssymb,amsthm,amsfonts,graphicx}
\newtheorem{example}{Example}
\usepackage{natbib}
\usepackage{epsfig,bm,authblk,url,blkarray}

\usepackage{float} 
\usepackage{booktabs} 
\usepackage{graphicx} 
\usepackage[margin=1cm]{caption} 
\usepackage{mathtools}
\usepackage{xcolor}
\usepackage{color}
\floatstyle{plain}
\usepackage{subfigure}
\usepackage{rotating}
\usepackage{makecell}

\usepackage{algcompatible}
\usepackage{algorithmicx}
\usepackage[noend]{algpseudocode}
\usepackage{algorithm}
\usepackage{diagbox}
\newfloat{Algorithm}{thp}{lop}
\floatname{Algorithm}{Algorithm}

\usepackage{titling}
\newcommand{\KL}[2]{\mathrm{KL}\!\left(#1 \,\|\, #2\right)}

\newtheorem{proposition}{Proposition}
\newtheorem{corollary}{Corollary}

\theoremstyle{remark}
\newtheorem{remark}{Remark}
\theoremstyle{plain}

\begin{document}

\title{Deep adaptive design with an evidential bias criterion}
\date{\empty}
\author{
  David Chen\thanks{\textit{Department of Statistics and Data Science, National University of Singapore}.}
  \and
  Michael Evans\thanks{\textit{Department of Statistical Sciences, University of Toronto, Toronto, ON M5S 3G3, Canada}.}
  \and
  Xinwei Li \thanks{\textit{Department of Civil and Environmental Engineering, National University of Singappore}.}
  \and
  Prateek Bansal\thanks{\textit{Department of Civil and Environmental Engineering, National University of Singapore} and \textit{Department of Statistics and Data Science, National University of SIngapore}.}
  \and
  David J. Nott\thanks{Corresponding author: \texttt{standj@nus.edu.sg}. \textit{Department of Statistics and Data Science, National University of Singapore} and \textit{Institute of Operations Research and Analytics, National University of Singapore}.}
}

\maketitle
\vspace{-0.7in}

\begin{abstract}
Bayesian optimal experimental design (BOED) aims to collect informative data by optimizing an expected utility reflecting the goals of an experiment.
However, this optimization is computationally challenging for common utilities and complex models.  This is especially so for sequential or adaptive designs, where design and data collection alternate, 
so that feedback from already observed data must be taken into account.  
Most existing BOED research employs information gain as the utility, leading to the expected information gain (EIG) criterion. While EIG is widely useful, it may not always adequately reflect experimental goals.
EIG can be viewed as rewarding experiments that produce large positive evidence for the truth on average, but it does not directly control the risk of an experiment producing misleading evidence.
Here we consider an alternative criterion, which we call bias against (BA), that prioritizes such 
control.  To address computational challenges when applying this criterion for
adaptive design, we consider a
policy-based deep adaptive design framework, which has previously been 
used for the EIG criterion.   
Minimizing a tractable upper bound on the BA objective is equivalent to
maximizing a variance-penalized EIG criterion, and we optimize the latter by
approximating it by Monte Carlo and learning design policies using
stochastic gradient methods. 
The differences between 
BA and EIG designs are demonstrated in several examples 
including the adaptive design of a complex discrete choice experiment.

\smallskip
\noindent \textbf{Keywords:}  Bias against; Deep adaptive design; Expected information gain; Optimal Bayesian experimental design; Sequential experimental design. 

\end{abstract}

\section{Introduction}\label{sec:Intro}

Collecting the most informative data possible is crucial to researchers in many different
fields, particularly when data collection is expensive.  Bayesian optimal experimental
design (BOED) proposes to choose the best design by maximizing an
expected utility.  Popular
design criteria are expectations, under the joint Bayesian model, of nonlinear
functions of the posterior density or marginal likelihood, quantities that are
intractable in most problems.  Because the intractabilities appear as an
inner expectation inside a nonlinearity, the criterion cannot be written as a
single expectation, and direct approximation requires nested Monte Carlo methods
\citep{rainforth+cyww18}.  This is referred to as the ``double intractability''
of the design objective.
Sequential or adaptive
design problems which alternate between observing data and choosing design variables
are particularly difficult, due to the need to 
incorporate the feedback from past data.
Recent progress in addressing the computational burden has used policy-based approaches, where a policy function specifying the design choice as a function of past data is optimized using techniques inspired by dynamic programming, reinforcement learning and deep learning \citep{huan+m16,foster+imr21,ivanova+fkgr21,shen+h23,blau+bcd22,blau+cdsb23,lim+nihg22}.    
These methods learn better designs than traditional greedy approaches and can be 
amortized, which means that after learning the policy the
computation of designs can be done very rapidly.  

Existing work on amortized adaptive design has focused mostly on using the expected information gain (EIG) criterion \citep{lindley56}.  While EIG is widely used, alternative criteria can be preferable in some situations, 
and here we explore an evidential bias criterion and the 
construction of designs
using the deep adaptive design approach of \cite{foster+imr21}.  
EIG can be thought of as rewarding large positive evidence
for the truth on average, whereas the evidential bias criterion considered here is a
probability of obtaining misleading evidence against the truth.  
The concern with controlling misleading evidence in 
design is not new. 
For example, \cite{royall97} argues that likelihood ratios are an appropriate way to quantify the strength of statistical evidence, and \citet{royall00} studies the probability that a likelihood ratio favours a false hypothesis, with
implications for design questions such as the choice of sample size.
\citet{blume02} extends the discussion to a range of study designs. 
The criterion we consider here addresses the same concern from a Bayesian 
perspective. It is built from the relative belief ratio, 
which measures the change in belief from prior to posterior, 
and the probability of misleading evidence is computed under the 
joint Bayesian model. 
Our concern in this work is with making such a criterion usable for 
amortized sequential design.
This paper makes four contributions.  First, we propose the use of evidential
bias -- also called the bias against (BA) in \cite{evans15} -- as a useful criterion in sequential
design problems.  Secondly, we develop an upper bound on the criterion that is 
insightful and easier to optimize than the original criterion using stochastic gradient methods.
Optimizing the upper bound is equivalent to maximizing a variance penalized version of the EIG
criterion, and is related to the risk-aware BOED approach of \cite{shen+h26}.  
Thirdly, we adapt the deep adaptive design framework of \cite{foster+imr21} to computation of designs
for the BA criterion.  Finally, we demonstrate how designs based on 
bias against differ from EIG-based designs in several examples, including the adaptive
design of a complex discrete choice experiment.  

Our contributions build on recent literature on computational methods for Bayesian sequential
design.  An important early work by \cite{huan+m16} computed non-myopic 
Bayesian adaptive designs by optimizing a 
policy for a finite-horizon Markov decision process.  They use an implicit representation of the optimal policy and solve for it
using approximate dynamic programming.   
\cite{foster+imr21} make the design policy explicit, 
and extend the contrastive bounds of \cite{foster+jotr20} to the sequential setting.  The bounds are easily
estimated by Monte Carlo and optimized using stochastic gradient approaches.  
Their deep adaptive design (DAD) approach
has been generalized in several directions, such as for implicit likelihoods \citep{ivanova+fkgr21,lim+nihg22}
and semi-amortized algorithms \citep{ivanova+hgr24}.  
\cite{blau+bcd22} combine a reinforcement learning formulation of Bayesian
adaptive design with the variational bounds in \cite{foster+imr21} and
use well-developed policy optimization methods from reinforcement learning to obtain
better designs, including for discrete design problems.  \cite{shen+h23} 
also consider reinforcement learning approaches, employing an
actor-critic policy gradient method for the optimization. 
\cite{shen+dh24} extend \cite{shen+h23} by employing
variational ``one point'' reward formulations and incorporating 
model uncertainty, nuisance parameters, handling of implicit models 
and more flexible design objectives.  
\cite{blau+cdsb23} address the difficulty of estimating the bounds of \cite{foster+imr21}
when the EIG is large, which they do by proposing a cross-entropy estimator 
based on a sequential version of the Barber-Agakov bound \citep{barber+a03}.  
A related approach is that of \cite{bracher+kiibr26}, where the authors avoid 
computation of the approximate posterior density and use diffusion approaches.  
\cite{phillips+kr26} obtain
singly intractable objectives using score-matching methods for optimizing 
the EIG criterion.    
%\cite{iollo+haf24} and \cite{iollo+haf25} develop sequential 
%design methods for the EIG which perform posterior sampling 
%alongside design optimization, the latter avoiding EIG lower bounds 
%altogether by sampling an expected posterior via diffusion-based 
%samplers within a bi-level optimization loop. Neither considers 
%policy-based methods. 

The above works focus mostly on EIG as the design criterion,
although some works such as \cite{shen+dh24} consider EIG for
subsets of parameters, model indicators or predictive
quantities of interest.  \cite{huang+wbka26} consider similar design
objectives which can be dynamically chosen, with amortization over
the choice of design and the posterior estimation.  
Many alternatives to EIG have been explored in BOED, but little of the 
existing work has been in the context of amortized sequential design.  
Some recent exceptions are \cite{huang+gak24}, who consider
design using predictive utilities which reflect downstream decision-making tasks, 
and \cite{rossa+pr26}, who formulate Bayesian design 
decision-theoretically in terms of expected future loss, leading to singly intractable objectives.  Their approach
requires two policies to be learnt, 
which they call the design and action policies.  

One goal of the present paper is to produce adaptive 
designs using an alternative
to EIG, which is focused on the avoidance of misleading evidence. 
The existing work most closely related to ours is 
\cite{shen+h26}, where the authors consider 
design criteria that penalize expected utility with
a utility variance.  They focus mostly on the prior-to-posterior Kullback-Leibler divergence as the utility, 
leading to a penalized EIG criterion.  
Penalizing an expected utility by its variance is one of 
many ways of expressing risk aversion in design, and
\citet[Section~6.2]{huan+jm24} discuss some of the alternatives, including
quantile-based measures such as the value-at-risk and 
conditional value-at-risk \citep{rockafellar+u00}, 
worst-case criteria and entropic risk.  They observe that these
approaches have so far seen relatively little use in nonlinear 
design with general utilities.
We consider penalized EIG criteria in our work, although we 
use a different variance
penalty term to \cite{shen+h26}, and focus on adaptive 
designs using policy learning.  
Our approach also has a different motivation
in terms of avoiding evidential bias, an interpretation that is helpful
in choosing the penalty parameter.
The connections between our approach and that of \cite{shen+h26} are discussed 
in detail in Section~\ref{sec:upper-bound}.  

In the next section we discuss the Bayesian formulation of optimal 
experimental design
in batch and sequential settings.  We then explain the EIG criterion and the 
alternative BA criterion that we focus on in our work.  
In Section~\ref{sec:upper-bound} we derive an upper bound on the BA criterion 
using the Paley-Zygmund inequality \citep{paley+z32}.  The bound has 
a simple expression in terms of the mean and variance of the log 
relative belief ratio, and we show that 
minimizing it is equivalent to maximizing a variance penalized EIG criterion, 
the latter being more convenient to optimize.  We also compare the bias
against, its upper bound and EIG theoretically.  
Section~\ref{sec:computation} discusses 
computational issues, while  
Section~\ref{sec:example} considers three examples, including the
adaptive design of a complex discrete choice experiment.  

\section{Bayesian optimal experimental design}\label{optimal}

First we briefly review basic ideas of BOED.   For further background, 
\cite{huan+jm24} give a comprehensive recent
review, and \cite{rainforth+fis24} and \cite{ryan+dmp16} discuss
computational aspects.  \cite{challoner+v95} 
consider classical design problems from the point of view of Bayesian decision theory.

\subsection{Batch and sequential design}

We introduce some notation first.  
Denote data to be collected as $y=(y_1,\dots, y_T)$ where $y_t\in \mathcal{Y}_t$, $t=1,\dots, T$.  There is an assumed parametric model for $y_t$ having parameters 
$\theta$, with a density 
$p(y_t|x_t,\theta)$, where $x_t\in \mathcal{X}_t$ is some design variable to be
chosen.  The prior density for $\theta$ is denoted $p(\theta)$.  
Write $x=(x_1,\dots, x_T)$, and the model for $y$ given $x,\theta$ is 
$$p(y|x,\theta)=\prod_{t=1}^T p(y_t|x_t,\theta).$$
We think of $t=1,\dots, T$ as indexing a sequence of experiments, and $x_t$ is chosen
before $y_t$ is observed.  The information available for choosing $x_t$ comes 
from past experiments, ${\cal I}_{t-1}=\{(y_j,x_j): j<t\}\in \mathcal{H}_t$ with ${\cal I}_0$ being the empty set.

In traditional ``batch'' experimental design we consider $T=1$ so that there
is a single experiment where the whole design $x$ is chosen before observing $y$.  
A utility function 
$U(y,x,\theta)$ is considered, and we choose $x$ by maximizing 
expected utility:  
$$x^*:=\arg \max_x E_{y\sim p(y|x)}\left(E_{\theta\sim p(\theta|y,x)}\left(U(y,x,\theta)\right)\right),$$
where
$$p(\theta|y,x)\propto p(\theta)\prod_{t=1}^T p(y_t|x_t,\theta),$$
is the posterior density for $\theta$, 
and $p(y|x)=E_{\theta\sim p(\theta)}(p(y|x,\theta))$.  
Equivalently, we can write
$$x^*=\arg \max_x E_{\theta\sim p(\theta)}\left(E_{y\sim p(y|x,\theta)}\left(U(y,x,\theta)\right)\right).$$
Throughout the manuscript, we will sometimes write the posterior density
$p(\theta|y,x)$ as $p(\theta|y)$ for simpler notation where no confusion
will arise in suppressing dependence on $x$.  

In contrast to batch experimental design, our work will focus on 
adaptive experimental design (also called sequential experimental design) 
where $T>1$ and at step $t$ the choice of $x_t$ can
depend on ${\cal I}_{t-1}$, so that feedback from previoius observations
must be accounted for in design decisions.    
Traditional (greedy) Bayesian adaptive design would make the choice of $x_t$ 
by maximizing the expected value of a utility 
$U_t(y_t,x_t,{\cal I}_{t-1},\theta)$:
$$x_t^*=\arg \max_{x_t} E_{y_t\sim p(y_t|{\cal I}_{t-1},x_t)}\left(E_{\theta\sim p(\theta|{\cal I}_{t})}\left(U_t(y_t,x_t,{\cal I}_{t-1},\theta)\right)\right),$$
where $p(y_t|{\cal I}_{t-1},x_t)=E_{\theta\sim p(\theta|{\cal I}_{t-1})}( p(y_t|\theta,x_t))$ and
$p(\theta|{\cal I}_{t})$ is the posterior density of $\theta$ given ${\cal I}_{t}$.  Equivalently, 
we can write
$$x_t^*=\arg \max_{x_t} E_{\theta\sim p(\theta|{\cal I}_{t-1})}\left(E_{y_t\sim p(y_t|x_t,\theta)}\left(U_t(y_t,x_t,{\cal I}_{t-1},\theta)\right)\right).$$
This traditional adaptive BOED approach has two main problems.  Firstly, 
computation is difficult because we need to repeatedly approximate the
posterior distribution after each experiment.  Secondly, because we aim
to only optimize the expected utility for the next experiment, we ignore
the effects of future data.  This leads to so-called ``greedy'' or ``myopic''
designs.  Amortized policy-based designs (e.g. \citealt{rainforth+fis24}) 
can address both of these issues.  

\subsection{Amortized sequential design}

Policy-based sequential design methods have
become quite popular recently, and these approaches
were first used for adaptive BOED 
by \cite{huan+m16} to the best of our knowledge.  In their
formulation the specification of the policy is implicit, and here
we will follow \cite{foster+imr21} who use an explicit 
policy function depending on learnable parameters $\varphi$, defining 
for any fixed
step $t$ a mapping $\eta_\varphi : \mathcal{H}_t \rightarrow \mathcal{X}_t$.
The construction of this mapping is important and should respect the symmetries of the
problem.  We will discuss this later.
\cite{foster+imr21} learn the parameters $\varphi$ to achieve the best global value of the
design criterion over the entire sequence of $T$ data-collection steps. This procedure does
not require approximating a posterior density at any stage, and, as noted above, it learns
better designs by optimizing the global design criterion 
directly rather than greedily optimizing
local criteria, which may yield inferior solutions for the global objective. A further
benefit is that, once the policy is learned, generating designs is very fast which is an important consideration in many applications.

\subsection{Expected information gain and relative belief}

\cite{foster+imr21} consider deep adaptive design for the expected information
gain criterion, denoted here as $\text{EIG}(x)$.  
We consider the batch (non-adaptive) case first.  
There are many equivalent formulations of 
EIG, and we begin with one expressed in terms of the prior-to-posterior 
Kullback-Leibler (KL) divergence \citep{kullback+l51}.  
For two distributions $Q$ and $\widetilde{Q}$ with densities $q$ 
and $\widetilde{q}$ respectively, the KL divergence of $Q$ 
from $\widetilde{Q}$ is 
$$\KL{\widetilde{q}(\theta)}{q(\theta)}
:=E_{\theta\sim \widetilde{q}(\theta)}(\log \widetilde{q}(\theta)-\log q(\theta)),$$
which measures how far $Q$ is from $\widetilde{Q}$ and is zero if and only if $\widetilde{Q}=Q$.  

The EIG can be defined as
\begin{align}
  \text{EIG}(x) 
  & := E_{y\sim p(y|x)}E_{\theta\sim p(\theta|y,x)}\left\{
  \log \frac{p(\theta|y,x)}{p(\theta)}\right\} \label{EIG2} \\
  & = E_{y\sim p(y|x)} \left\{ \KL{p(\theta|y,x)}{p(\theta)} \nonumber \right\},
\end{align}
so that $\text{EIG}(x)$ measures how far the prior is from 
the posterior on average for the design $x$.  If 
$\text{EIG}(x)$ is large this is good, since a large change in the
Bayesian update on average indicates that the experiment is highly informative.
We can immediately notice the potential
difficulty in optimizing $\mathrm{EIG}(x)$.  
In the KL divergence, evaluating $p(\theta|y,x)$ pointwise involves knowing
the normalizing constant $p(y|x)=E_{\theta\sim p(\theta)}(p(y|x,\theta))$, 
which is an intractable integral in the form of an expectation that appears
inside a log term.  Optimizing the nested expectation, which does not collapse
to a single joint expectation, is challenging.

An equivalent definition of $\mathrm{EIG}(x)$ that follows from \eqref{EIG2} is
\begin{align}
    \text{EIG}(x) & = E_{y\sim p(y|x)}(H(\theta)-H(\theta|y)),
\end{align}
where
$$H(\theta)=-E_{\theta\sim p(\theta)}(\log p(\theta))\;\;\mathrm{and} \;\; 
H(\theta|y)=-E_{\theta\sim p(\theta|y)}(\log p(\theta|y)),$$
which is the expected reduction in Shannon or differential entropy between
prior and posterior for the experiment $x$.  The quantity 
$H(\theta)-H(\theta|y)$ is called the information gain, and its use
as a utility for Bayesian experimental design was first suggested in 
\cite{lindley56}.  

The ratio that appears in \eqref{EIG2} is called the relative belief ratio \citep{evans15}, and we will introduce
some special notation for it, 
\begin{align} 
  \text{RB}(\theta;y,x):=\frac{p(\theta|y,x)}{p(\theta)}.  \label{relbelief}
\end{align}
The bias against criterion that is introduced in the next section is defined in terms of the relative belief ratio.  
The meaning of $\text{RB}(\theta;y,x)$ is the following:  if $\text{RB}(\theta;y,x)>1$ for a given $\theta$, then the data $y$ gives evidence in favour of $\theta$ being the true value.  
This is because $\text{RB}(\theta;y,x)>1$ implies that the posterior is larger than the prior at $\theta$, so
our belief that $\theta$ is the true value has increased.  Similarly, if $\text{RB}(\theta;y,x)<1$, there
is evidence against $\theta$ being the true value, whereas if $\text{RB}(\theta;y,x)=1$ then there
is no evidence either way.  We will call the set of parameters with evidence in favour, 
\begin{align}
  & \left\{\theta: \frac{p(\theta|y,x)}{p(\theta)}> 1\right\},  \label{plausible}
\end{align}
the plausible region.  
Introducing the notation
\begin{align}
  Z(y,x,\theta) & :=\log \text{RB}(\theta;y,x), \label{logrb}
\end{align}
we see from \eqref{EIG2} that $\text{EIG}(x)=E_{y\sim p(y|x), \theta\sim p(\theta|y,x)} (Z(y,x,\theta))$, 
the expected log relative belief.   

It is useful to have an alternative expression for the relative belief.  By rearranging Bayes' rule, 
\begin{align}
  \text{RB}(\theta;y,x) & =\frac{p(\theta|y,x)}{p(\theta)}=\frac{p(y|x,\theta)}{p(y|x)} \label{relbeliefalt}
\end{align}
and then using this alternative expression in \eqref{EIG2} we obtain a corresponding
alternative expression for $\text{EIG}(x)$, 
\begin{align}
  \text{EIG}(x) & = E_{\theta\sim p(\theta)}E_{y\sim p(y|\theta)}\left\{
  \log \frac{p(y|x,\theta)}{p(y|x)}\right\}. \label{EIG1} 
\end{align}

For design, we want to maximize $\text{EIG}(x)$.    \cite{foster+jotr20} consider
independent draws $\theta_0,\theta_1,\dots, \theta_L\sim p(\theta)$ and 
estimate $p(y|x)$ in \eqref{EIG1} by either
$$\frac{1}{L+1}\sum_{l=0}^L p(y|x,\theta_l)$$
or
$$\frac{1}{L}\sum_{l=1}^L p(y|x,\theta_l).$$
This results in two approximations to the EIG criterion, 
\begin{align}
  \text{EIG}_l(x) & = E_{\theta_0,\theta_1,\cdots, \theta_L\sim p(\theta)}E_{y\sim p(y|\theta_0)}\left\{
    \log \frac{p(y|x,\theta_0)}{\frac{1}{L+1}\sum_{l=0}^L p(y|x,\theta_l)}\right\}, \label{EIG-est1}
\end{align}
and
\begin{align}
  \text{EIG}_u(x) & = E_{\theta_0,\theta_1,\cdots, \theta_L\sim p(\theta)}E_{y\sim p(y|\theta_0)}\left\{
    \log \frac{p(y|x,\theta_0)}{\frac{1}{L}\sum_{l=1}^L p(y|x,\theta_l)}\right\}. \label{EIG-est2}
\end{align}
$\mathrm{EIG}_l(x)$ and $\mathrm{EIG}_u(x)$
are respectively lower and upper bounds for \eqref{EIG1}. The lower bound is the prior contrastive estimation (PCE) bound of \cite{foster+jotr20}, and the upper bound is the standard nested Monte Carlo estimator, whose upward bias follows from Jensen's inequality.   See \cite{rainforth+cyww18} for the convergence properties of such estimators. 
$\mathrm{EIG}_l(x)$ and $\mathrm{EIG}_u(x)$
are sometimes called contrastive bounds, because of the use of contrastive prior samples $\theta_1,\dots, \theta_L$.
\cite{foster+imr21} extend them to the sequential setting.

The bounds become tight as $L\rightarrow\infty$, and they can be optimized using stochastic
gradient ascent. 
For optimizing $\varphi$ with a policy mapping in the adaptive design context,
\cite{foster+imr21} use the lower bound
for reasons of numerical stability.  In this case, since 
$x_t$ is determined from $\varphi$ given ${\cal I}_{t-1}$ for all $t$, 
the design criteria \eqref{EIG-est1}
and \eqref{EIG-est2} are functions of $\varphi$.  To make this clearer, instead of 
writing $p(y|x,\theta)$ we write
$$p(y|\varphi,\theta)=\prod_{t=1}^T p(y_t|x_t=\eta_{\varphi}({\cal I}_{t-1}),\theta),$$
and 
$p(y|\varphi)=E_{\theta\sim p(\theta)} (p(y|\theta,\varphi))$. With this notation, 
we write \eqref{EIG-est1} as  
\begin{align}
  \text{EIG}_l(\varphi) & = E_{\theta,\theta_1,\cdots, \theta_L\sim p(\theta)}E_{y\sim p(y|\varphi,\theta)}\left\{
    \log \frac{p(y|\varphi,\theta)}{\frac{1}{L+1}\sum_{l=0}^L p(y|\varphi,\theta_l)}\right\}.\label{EIG-est12}
\end{align}
and it is this criterion that \cite{foster+imr21} optimize to learn the policy parameters $\varphi$.

\section{Bias against criterion and its upper bound}\label{sec:upper-bound}

As an alternative to EIG, we now consider the bias against criterion for design, the main object
of study for the rest of the paper. Again, we begin with the batch case.  
For a given parameter value $\theta$, the bias against $\theta$ is 
\begin{align}
  \text{BA}(\theta;x) & :=P_{y\sim p(y|x,\theta)}\left(\frac{p(y|\theta,x)}{p(y|x)}\leq 1 \right),
\end{align}
where $p(y|x)=E_{\theta\sim p(\theta)}(p(y|x,\theta))$.  
For a given $\theta$, $\text{BA}(\theta;x)$ is the probability that the true value of 
$\theta$ does not belong to the plausible region for data $y$ generated at $\theta$
using design $x$.
We write $\text{BA}(x)$ (which we call
simply the bias against) for its expectation, 
\begin{align} 
  \text{BA}(x) & := E_{\theta\sim p(\theta)}\left(\text{BA}(\theta;x)\right)=P_{\theta\sim p(\theta),y\sim p(y|x,\theta)}\left(\frac{p(y|\theta,x)}{p(y|x)}\leq 1\right). \label{bias-against}
\end{align} 
  $\text{BA}(x)$ is the average of $\text{BA}(\theta;x)$ over the prior for $\theta$, and
is the probability under the joint Bayesian model that misleading evidence is produced using design $x$.  
The discussion above generalizes to the case of a parameter of interest that is not the full
parameter $\theta$, but we don't consider this here for simplicity.   We can rewrite
\eqref{bias-against} as 
\begin{align}
  \text{BA}(x) & := E_{y\sim p(y|x)}\left\{P_{\theta\sim p(\theta|y,x)}\left(\frac{p(y|\theta,x)}{p(y|x)}\leq 1\right)\right\}.  \label{bias-against2}
\end{align}
which shows that minimizing $\text{BA}(x)$ corresponds to maximizing expected utility 
for the utility function 
$$-P_{\theta\sim p(\theta|y,x)}\left(\frac{p(y|\theta,x)}{p(y|x)}\leq 1\right).$$
Maximizing the expected value of this 
utility means having a small probability that the experiment using $x$ will produce
misleading evidence. 

Direct stochastic optimization of $\text{BA}(x)$ is difficult: it is a probability, 
for which the gradients are uninformative when the design is already reasonably 
good.  We therefore work with an upper bound on $\text{BA}(x)$ which is easier to 
optimize.  The bound is obtained from the 
Paley-Zygmund inequality \citep{paley+z32} and has an interpretation in 
terms of the mean and variance of the log relative belief ratio.  We first state 
the relevant inequality, then derive the bound, and finally describe computation 
in the policy-based setting.

\subsection{The Paley-Zygmund inequality and a coefficient-of-variation bound}
\label{sec:PZ}

Recall the notation $\text{RB}(\theta; y, x):= p(y|x,\theta)/p(y|x)$ 
for the relative belief ratio at $\theta$ when the 
data are $y$ and the design is $x$, and $Z(y,x,\theta):=\log 
\text{RB}(\theta; y, x)=\log p(y|x,\theta)-\log p(y|x)$ for its 
logarithm.  Throughout this section, expectations are taken under the joint Bayesian model $\theta\sim 
p(\theta)$, $y\sim p(y|x,\theta)$, and we write $E(\cdot)$ and 
$\mathrm{Var}(\cdot)$ for the joint expectation and variance.  With this 
notation, $\text{EIG}(x)=E(Z(y,x,\theta))$ and 
\begin{align*}
  \text{BA}(x) & = P(Z(y,x,\theta)\leq 0).
\end{align*}  

Our bound is derived from the following modified form of the Paley-Zygmund 
inequality.  It is usually stated for a non-negative random variable, 
but the proof only requires the mean to be non-negative.  
\begin{proposition}\label{prop:PZ}
Let $Z$ be a random variable with $0<\mathrm{Var}(Z)<\infty$ and $E(Z)\geq 0$.  Then for 
all $\delta\in[0,1]$,
\begin{align}
  P(Z>\delta E(Z)) & \geq \frac{(1-\delta)^2 E(Z)^2}{E((Z-\delta E(Z))^2)} 
  = \frac{(1-\delta)^2 E(Z)^2}{\mathrm{Var}(Z)+(1-\delta)^2E(Z)^2} 
  \label{eq:PZ-lower}\\
  P(Z\leq \delta E(Z)) & \leq \frac{E((Z-E(Z))^2)}{E((Z-\delta E(Z))^2)} = \frac{\mathrm{Var}(Z)}{\mathrm{Var}(Z)+(1-\delta)^2 E(Z)^2}. 
  \label{eq:PZ-upper}
\end{align}
\end{proposition}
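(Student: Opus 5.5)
The plan is to follow the standard Cauchy--Schwarz proof of Paley--Zygmund. The only change is to center at the threshold $c:=\delta E(Z)$ rather than using non-negativity of $Z$. Write $\mu=E(Z)\geq 0$ and $\sigma^2=\mathrm{Var}(Z)\in(0,\infty)$. First I record the elementary identity
\[
E((Z-c)^2)=\sigma^2+(\mu-c)^2=\sigma^2+(1-\delta)^2\mu^2,
\]
which gives the equalities on the right of \eqref{eq:PZ-lower} and \eqref{eq:PZ-upper}. It also shows the denominators are strictly positive, since $\sigma^2>0$.

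For the lower bound \eqref{eq:PZ-lower}, I use $\mu-c=(1-\delta)\mu\geq 0$ and write
\[
(1-\delta)\mu=E(Z-c)\leq E\big((Z-c)\mathbf{1}\{Z>c\}\big).
\]
This inequality holds because the term $(Z-c)\mathbf{1}\{Z\leq c\}$ is non-positive. Here is exactly where $\mu\geq 0$ and $\delta\le 1$ replace non-negativity of $Z$: they make the left side non-negative, so squaring preserves the inequality. Applying Cauchy--Schwarz to the right side then gives
\[
(1-\delta)^2\mu^2\leq E((Z-c)^2)\,P(Z>c),
\]
and dividing by $E((Z-c)^2)>0$ yields \eqref{eq:PZ-lower}.

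The upper bound \eqref{eq:PZ-upper} follows by taking complements: $P(Z\leq c)=1-P(Z>c)$. Then
\[
1-\frac{(1-\delta)^2\mu^2}{\sigma^2+(1-\delta)^2\mu^2}=\frac{\sigma^2}{\sigma^2+(1-\delta)^2\mu^2}.
\]
This matches the stated form $E((Z-\mu)^2)/E((Z-c)^2)$.

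There is no genuine obstacle. The one point to check carefully is the squaring step, which needs the sign condition $(1-\delta)\mu\ge 0$. If $\mu=0$ the bound is trivial: the lower bound is $0$ and the upper bound is $1$. The integrability of $Z^2$ assumed via $\sigma^2<\infty$ justifies Cauchy--Schwarz.
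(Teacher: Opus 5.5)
Your proposal is correct and follows essentially the same route as the paper's proof in Appendix~\ref{app:PZ}. Both split $E(Z-\delta E(Z))$ over $\{Z>\delta E(Z)\}$ and its complement, apply Cauchy--Schwarz, square using $(1-\delta)E(Z)\geq 0$, divide by $E((Z-\delta E(Z))^2)=\mathrm{Var}(Z)+(1-\delta)^2E(Z)^2>0$, and obtain the upper bound by taking complements.
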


The proof is an easy application of the 
Cauchy-Schwarz inequality, given in Appendix~\ref{app:PZ}.
Applying Proposition~\ref{prop:PZ} to $Z=Z(y,x,\theta)$ with $\delta=0$, and 
noting that $E(Z(y,x,\theta))=\text{EIG}(x)\geq 0$, we obtain the following.

\begin{corollary}\label{cor:EIG-bound}
If $0<\mathrm{Var}(Z(y,x,\theta))<\infty$, then 
\begin{align}
  \operatorname{BA}(x) & \leq \frac{\mathrm{Var}(Z(y,x,\theta))}{\mathrm{Var}(Z(y,x,\theta))+\operatorname{EIG}(x)^2}
  =: \widetilde{\operatorname{BA}}(x).
  \label{eq:BA-EIG-bound}
\end{align}
\end{corollary}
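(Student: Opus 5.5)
The plan is to apply Proposition~\ref{prop:PZ} directly to the random variable $Z=Z(y,x,\theta)=\log p(y|x,\theta)-\log p(y|x)$, with $(\theta,y)$ drawn from the joint Bayesian model $\theta\sim p(\theta)$, $y\sim p(y|x,\theta)$, and with $\delta=0$. This requires checking the hypotheses of the proposition in turn.

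\textbf{Variance hypothesis.} The condition $0<\mathrm{Var}(Z)<\infty$ is exactly the assumption of the corollary, so nothing needs to be proved. Finite variance also gives $E|Z|<\infty$, so $E(Z)$ is well defined and finite.

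\textbf{Non-negative mean.} This is the only substantive step. We have $E(Z)=\text{EIG}(x)$ by \eqref{EIG1}. Integrating first over $y$ for fixed $\theta$ in \eqref{EIG1} is not enough, because the inner quantity is not a KL divergence between normalized densities in $y$: its denominator is the marginal $p(y|x)$, not a density depending on $\theta$. Instead I would use the form \eqref{EIG2}, which writes
\[
\text{EIG}(x)=E_{y\sim p(y|x)}\bigl\{\KL{p(\theta|y,x)}{p(\theta)}\bigr\}.
\]
Each inner KL divergence is non-negative by Gibbs' inequality, so the average is non-negative.

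The switch from \eqref{EIG1} to \eqref{EIG2} reorders the joint expectation. This is justified by Fubini, since $E|Z|<\infty$. Equivalently, $\text{EIG}(x)$ is the mutual information between $\theta$ and $y$, which is the KL divergence of the joint density from the product of its marginals and is therefore non-negative. This integrability point is the only potential obstacle, and the finite-variance hypothesis handles it.

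\textbf{Conclusion.} With $\delta=0$, inequality \eqref{eq:PZ-upper} reads
\[
P(Z\le 0)\le \frac{\mathrm{Var}(Z)}{\mathrm{Var}(Z)+E(Z)^2}.
\]
Substituting $P(Z\le 0)=\text{BA}(x)$ from the identity in Section~\ref{sec:PZ}, together with $E(Z)=\text{EIG}(x)$, gives \eqref{eq:BA-EIG-bound}. The denominator is positive because $\mathrm{Var}(Z)>0$, so the bound is well defined.
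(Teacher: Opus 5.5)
Your proposal is correct and follows the paper's own argument: apply Proposition~\ref{prop:PZ} with $\delta=0$ to $Z(y,x,\theta)$ under the joint model, using $E(Z)=\mathrm{EIG}(x)\geq 0$ and $\mathrm{BA}(x)=P(Z\leq 0)$. One aside in your write-up is wrong, though your argument does not depend on it: integrating over $y$ first in \eqref{EIG1} does work, because $p(y|x)$ is itself a normalized density in $y$, so the inner expectation is $\KL{p(y|x,\theta)}{p(y|x)}\geq 0$ for every $\theta$.
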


The quantity $\widetilde{\text{BA}}(x)$ defined in \eqref{eq:BA-EIG-bound} is the 
upper bound on $\text{BA}(x)$ that we use throughout the remainder of the paper.  
Rearranging \eqref{eq:BA-EIG-bound} gives  
\begin{align}
  \frac{\text{sd}(Z(y,x,\theta))}{\text{EIG}(x)} & \geq 
  \sqrt{\frac{\text{BA}(x)}{1-\text{BA}(x)}},
  \label{eq:CV-bound}
\end{align}
where $\text{sd}(Z(y,x,\theta))=\sqrt{\mathrm{Var}(Z(y,x,\theta))}$ is the standard
deviation of $Z(y,x,\theta)$, 
and the left-hand side of \eqref{eq:CV-bound} is the coefficient of 
variation of the log relative belief ratio under the joint Bayesian model.  This 
is a unitless measure of how reliably the experiment produces evidence in 
favour of the true parameter value.  The inequality \eqref{eq:CV-bound} implies
that if the coefficient of variation of the log relative belief ratio is small, 
then the bias against is also small.

\subsection{A variance-penalized EIG criterion}\label{sec:penalized}

To understand better what minimizing $\widetilde{\text{BA}}(x)$ does, divide 
numerator and denominator on the right-hand side of \eqref{eq:BA-EIG-bound} by 
$\mathrm{Var}(Z(y,x,\theta))$ (assuming this is non-zero) to obtain
\begin{align}
\widetilde{\text{BA}}(x)=\frac{1}{1+\text{EIG}(x)^2/\text{Var}(Z(y,x,\theta))}. \label{eq:BA-bound}
\end{align}
Since $\text{EIG}(x)=E(Z(y,x,\theta))\geq 0$, this shows that 
minimizing $\widetilde{\text{BA}}(x)$ is equivalent to maximizing 
\begin{align}
  R(x) & := \frac{\text{EIG}(x)}{\mathrm{sd}(Z(y,x,\theta))}
   = \frac{E(Z(y,x,\theta))}{\sqrt{\mathrm{Var}(Z(y,x,\theta))}},  
  \label{eq:R-criterion}
\end{align}
which is the inverse of the coefficient of variation appearing in \eqref{eq:CV-bound}.  
The numerator is the expected information gain, which rewards designs that 
produce large evidence for the true value on average.  The 
denominator $\mathrm{sd}(Z(y,x,\theta))$ penalizes variability of the log 
relative belief ratio, expressing the preference for reliable evidence.  
Maximizing EIG while ignoring its variability, which is what the standard 
EIG criterion does, may produce designs for which evidence is large on 
average but highly variable.  Optimizing for average behaviour, however, 
may not effectively control the risk of observing misleading evidence.
Maximizing $R(x)$ is equivalent to maximizing $\mathrm{EIG}(x)$ 
if the log relative belief ratio has a variance independent of the design, but this
is not the case in general.  

Maximizing $R(x)$ directly is awkward because estimating the ratio using Monte Carlo
can be unstable.  However, 
any maximizer $x^*$ of $R(x)$ maximizes $\text{EIG}(x)$ 
subject to a constraint on $\mathrm{Var}(Z(y,x,\theta))$.  A 
Lagrangian argument (see 
Appendix~\ref{app:Lagrangian}) shows that $x^*$ is also a maximizer of 
\begin{align}
  L_\alpha(x) & := E(Z(y,x,\theta)) - \alpha\, \mathrm{Var}(Z(y,x,\theta)), 
  \label{eq:L-alpha}
\end{align}
for some value $\alpha\geq 0$.  Using a penalized objective such as \eqref{eq:L-alpha} to 
optimize a ratio is a standard technique in fractional programming \citep{dinkelbach67}.  Dinkelbach's algorithm alternates between optimizing
$x$ for fixed $\alpha$ and optimizing $\alpha$ for fixed $x$.  
We do not do this here, since, as we explain below, the whole path of designs
traced out by varying $\alpha$ is of interest, not just the solution optimising
$R(x)$.
We therefore optimize $L_\alpha(x)$ over a grid 
of $\alpha$ values, and select the value from the grid maximizing the original ratio 
$R(x)$ only if we wish to choose a single $\alpha$.    

The penalized criterion 
\eqref{eq:L-alpha} is highly interpretable.  A similar criterion 
is considered by \cite{shen+h26}.  
At $\alpha=0$ \eqref{eq:L-alpha} reduces to EIG, and 
increasing $\alpha$ trades expected information gain (the expected log relative belief) 
for reduced variability of the log relative belief. 
The criterion of \cite{shen+h26} is not exactly the same as \eqref{eq:L-alpha}, 
as they consider a variance
penalty of the form (in our notation) $\mathrm{Var}_{y\sim p(y|x)}(\KL{p(\theta|y,x)}{p(\theta)})$, 
which is the variance of a prior to posterior KL divergence utility with respect to $p(y|x)$.  \cite{shen+h26} consider batch design, with \citet[Chapter 5]{shen23} discussing the adaptive case.  Further discussion of connections with their
approach is given in the next subsection.     

We can give an interesting interpretation of the entire 
path of solutions for the design
as $\alpha$ varies in \eqref{eq:L-alpha}, in terms of an extended notion of misleading evidence.    
Fix a constant $t\geq 0$ and consider 
the event $\{Z(y,x,\theta)\leq t\}$, that the log relative belief ratio fails 
to exceed $t$.  The choice $t=0$ recovers the bias against.  
Applying Proposition~\ref{prop:PZ} to $Z=Z(y,x,\theta)$ with 
$\delta=t/\text{EIG}(x)$, which lies in $[0,1]$ when 
$0\leq t\leq \text{EIG}(x)$, gives
\begin{align}
  P(Z(y,x,\theta)\leq t) & \leq \frac{\mathrm{Var}(Z(y,x,\theta))}{\mathrm{Var}(Z(y,x,\theta))+(\text{EIG}(x)-t)^2}.
  \label{eq:BA-threshold}
\end{align}
Minimizing the bound \eqref{eq:BA-threshold} is equivalent to maximizing 
\begin{align}
  R_t(x) & := \frac{\text{EIG}(x)-t}{\mathrm{sd}(Z(y,x,\theta))},
  \label{eq:R-t}
\end{align}
which differs from \eqref{eq:R-criterion} by replacing $\text{EIG}(x)$ with the 
offset $\text{EIG}(x)-t$.  
The optimizer of \eqref{eq:R-t} is also an optimizer of the penalized
criterion
$$L_{\alpha,t}(x)=E(Z(y,x,\theta))-t-
\alpha\, \mathrm{Var}(Z(y,x,\theta))$$
for some $\alpha$, and the family of designs traced out by optimizing
this criterion is the same as the family of solutions traced out by varying 
$\alpha$ in $L_{\alpha}(x)$.    
However if $\alpha$ in \eqref{eq:L-alpha} is chosen to tighten the bound 
\eqref{eq:BA-threshold} as much as possible, 
the design which is chosen varies with $t$.  So different choices of 
$\alpha$ correspond to optimizing a bound for $P(Z(y,x,\theta)\leq t)$
for different choices of $t$.

Proposition~\ref{prop:Lagrangian} in Appendix~\ref{app:Lagrangian} links 
maximizers of 
$R_t(x)$ to maximizers of $L_\alpha(x)$.  Although we assumed 
$\mathrm{EIG}(x)>t$ in application of the Paley-Zygmund inequality, 
it is not necessary to restrict the domain of optimization using
this condition.  Provided the design class
contains at least one $x$ with $\mathrm{EIG}(x)\ge t$, so that
$\max_x R_t(x)\ge0$, any design with $\mathrm{EIG}(x)<t$ has $R_t(x)<0$
and cannot be a maximizer. At an 
interior maximizer $x^*$ the corresponding penalty is
\begin{align}
  \alpha^*(t) & = \frac{\text{EIG}(x^*)-t}{2\,\mathrm{Var}(Z(y,x^*,\theta))},
  \label{eq:alpha-star}
\end{align}
obtained by applying Proposition~\ref{prop:Lagrangian} with offset $t$, where 
$x^*=x^*(t)$ is the maximizer of $R_t$.  
Writing 
$r^*(t):=\max_x R_t(x)$, and since $R_t(x)=(\text{EIG}(x)-t)/\sqrt{\mathrm{Var}(Z(y,x,\theta))}$ 
is decreasing in $t$ for fixed $x$, it follows that 
$r^*(t)$ is non-increasing in $t$.  
Choosing $\alpha$ in 
\eqref{eq:L-alpha} corresponds to choosing a 
threshold $t$ for the misleading-evidence event $\{Z(y,x,\theta)\leq t\}$ whose 
bound \eqref{eq:BA-threshold} is being tightened.  In practice 
\eqref{eq:alpha-star} cannot be set in advance, as it depends on the mean and
variance of $Z$ at the optimum.   For the computations in Section~\ref{sec:computation}, we 
search over a grid for $\alpha$,  
but the above discussion clarifies that each penalty 
value corresponds to tightening the bound on a particular threshold event.

\subsection{Relationship to risk-aware BOED}

The discussion of the previous subsection leads to connections between
our approach and the risk aware BOED method of \cite{shen+h26}.  
It is possible to justify their
criteria as approximate expected utility maximization using
an upper bound based on the Paley-Zygmund inequality.  To see this, suppose
that $U(y,x)$ is a utility function, with $E(U(y,x))\geq 0$ and
$0<\mathrm{Var}(U(y,x))<\infty$ with $y\sim p(y|x)$.  \cite{shen+h26} focus mostly 
on the choice $U(y,x)=\KL{p(\theta|y)}{p(\theta)}$ 
in their work, and this utility leads to the penalized EIG criterion.
Now consider obtaining a design by minimizing
$P(U(y,x)\leq t)$ for $y\sim p(y|x)$, corresponding to maximizing a utility of
$-P(U(y,x) \leq t)$.   For $0\leq t\leq E(U(y,x))$, applying the Paley-Zygmund inequality similar to before, gives an upper bound
\begin{align*}
  P(U(y,x)\leq t) \leq \frac{\mathrm{Var}(U(y,x))}
    {\mathrm{Var}(U(y,x))+(E(U(y,x))-t)^2},
\end{align*}
and minimizing this is equivalent to maximizing
$$\frac{E(U(y,x))-t}{\sqrt{\mathrm{Var}(U(y,x))}}.$$
An optimum of this upper bound is an optimum of the penalized criterion
$$E(U(y,x))-t-\alpha \mathrm{Var}(U(y,x)),$$
for some value of the penalty $\alpha>0$, 
where the family of designs traced out does not depend on $t$.  
In our bias against approach the choice of threshold $t=0$ for
tightening the bound is natural if we wish to choose 
a single $\alpha$.  This is because $t=0$ corresponds
to a log relative belief of $0$, or a relative belief of $1$, which has a special
meaning as the boundary between evidence in favour and evidence against.  
However, for the Kullback-Leibler utility and most other choices, there might be no
obvious choice of threshold $t$ to use if we choose $\alpha$ 
by tightening the bound. 

\subsection{Theoretical comparisons of $\mathrm{EIG}$ and $R$} \label{sec:eig-r-comparison}

Now that we have discussed the motivation for the $\mathrm{BA}$ criterion and its upper bound, we give some theoretical comparisons with EIG.  We start with a
simple example that is informative about the relationship between $\mathrm{EIG}(x)$, $\mathrm{BA}(x)$ and $R(x)$. In the example we take $x$ as fixed, although
the sample size could be taken as the design variable.  Since there is no 
varying $x$, we
write $\text{EIG}$, $\mathrm{BA}$ and $R$ 
for the EIG criterion, bias against and 
ratio criterion.  We have observed earlier that the 
BA upper bound $\widetilde{\mathrm{BA}}$ is a decreasing
function of $R$.  Consequently, by making $R$ large enough, 
the bias against can be made as close to zero as we like. 
No such guarantee holds for the EIG criterion, and our example demonstrates this.  
It shows a situation where $\mathrm{EIG}\rightarrow\infty$ but $\mathrm{BA}\rightarrow q>0$, so that making $\mathrm{EIG}$ large enough cannot 
by itself drive the bias against to zero.    
The lesson is that controlling the bias against requires controlling 
the variance of the log relative belief ratio in addition to its mean, the 
EIG, and this is what the ratio criterion does.

\begin{example}[EIG doesn't control bias against]
Consider a random sample $w_1,\dots, w_n\sim N(\theta,\sigma^2)$ where 
$\sigma^2>0$ is known and $\theta$ is unknown with a normal prior, 
$N(0,\tau^2)$.  Write $\bar{w}$ for the sample mean, with distribution
$\bar{w}|\theta\sim N(\theta,s^2)$, where $s^2:=\sigma^2/n$.  
Let $\xi$ be a Bernoulli
random variable with $Pr(\xi=1)=q$, where $q$ is known, and 
$B\sim N(0,\sigma_B^2)$ where $\sigma_B^2>0$, with $\xi$ and $B$ independent.  Suppose we observe
$y=\bar{w}$ if $\xi=0$ and $y=B$ if $\xi=1$.  Hence we observe a normal mean
with probability $1-q$, but with probability $q$ there is a failure, and 
in that case we observe
a random variable that is uninformative about $\theta$.  

Let us write $p_0(y|\theta)=\phi(y;\theta,s^2)$ for the density of $y$ given
$\theta,\xi=0$, and $p_1(y)=\phi(y;0,\sigma_B^2)$ for the density of 
$y$ given $\xi=1$.  
The likelihood is 
$$p(y|\theta)=(1-q)p_0(y\mid \theta)+q p_1(y),$$
and the prior predictive density of $y$, denoted $m(y)$, is
$$m(y)=(1-q)m_0(y)+qp_1(y),$$
where $m_0(y):=\phi(y;0,s^2+\tau^2)$.  
Observe that the function $\log\frac{(a+c)}{(b+c)}$ is monotonic in 
$c$, taking the value $\log(a/b)$ at $c=0$ and tending to zero as
$c\rightarrow\infty$.  We can write the log relative belief as 
$$Z=\log \frac{p(y|\theta)}{p(y)}=\log \frac{(1-q)p_0(y;\theta)+q p_1(y)}
{(1-q)m_0(y)+qp_1(y)},$$
which has the form of $\log\frac{(a+c)}{(b+c)}$ with $a=(1-q)p_0(y;\theta)$, $b=(1-q)m_0(y)$ and $c=qp_1(y)$.  
Hence $0\leq |Z|\leq |K|$ where $K=\log a/b=\log( p_0(y;\theta)/m_0(y))$, and
$Z$ and $K$ have the same sign.  This last fact means we can write the
bias against as 
$$\mathrm{BA}=P(Z\leq 0)=P(K\leq 0).$$
Since $K$ is the log relative belief for a normal location problem with normal
prior, we can compute it explicitly as
\begin{align}
  K & = \frac{1}{2}\log \left(1+\tau^2/s^2\right)-\frac{(y-\theta)^2}{2s^2}+\frac{y^2}{2(s^2+\tau^2)}.  \label{Kexpression}
\end{align}

Next, write
\begin{align}
 P(K\leq 0) & = P(\xi=0)P(K\leq 0|\xi=0)+P(\xi=1)P(K\leq 0|\xi=1) \nonumber \\
  & =(1-q)P(K\leq 0|\xi=0)+q P(K\leq 0|\xi=1).  \label{nloc-rb}
\end{align}
We show in Appendix \ref{app:example1} that
\begin{align}
 P(K\leq 0|\xi=1) & \geq 1-Cs(\log n)^{1/2}  \label{xi0condition}
\end{align}
where $C$ is a constant depending only on $\tau$ and $\sigma_B$,  
and
\begin{align}
  P(K\leq 0|\xi=0)\leq \sqrt{2}(s/\tau)^{1/2}.   \label{xi1condition}
\end{align}
Combining \eqref{nloc-rb}, \eqref{xi0condition} and \eqref{xi1condition}, 
and observing that $s=\sigma/\sqrt{n} \rightarrow 0$ as $n\rightarrow\infty$, 
we have the bias against satisfies
$$P(Z\leq 0)=P(K\leq 0)\rightarrow q,$$ 
as $n\rightarrow\infty$.  So the bias against cannot be made to go to zero
no matter how large $n$ is.  

In Appendix~\ref{app:example1} we also demonstrate that 
\begin{align}
    \frac{1-q}{2}\log\left(1+\frac{\tau^2}{s^2}\right)-H(q)\leq 
    \mathrm{EIG}\leq \frac{1-q}{2}\log\left(1+\frac{\tau^2}{s^2}\right), \label{EIG-inequality}
\end{align}
where $H(q)=-q\log q-(1-q)\log(1-q)$.  
Recalling that $s^2=\sigma^2/n$, we see that
$\mathrm{EIG}=(1-q)/2\log n + O(1)$.  This establishes 
that $\mathrm{EIG}\rightarrow\infty$. 

Finally let us deduce that both the EIG and its variance go to infinity with $n$, 
which is why EIG does not control bias against in this example and why criteria
such as $R$ which consider this variance are more suitable.  
From \eqref{eq:CV-bound}, and since we have shown that $\mathrm{BA}=q+o(1)$, 
we have 
\begin{align*}
 \frac{\mathrm{Var}(Z)}{\mathrm{EIG}^2} &\geq \frac{q}{1-q}+o(1),
\end{align*}
which gives 
\begin{align*}
    \mathrm{Var}(Z) & \geq \left(\frac{q}{1-q}+o(1)\right)\mathrm{EIG}^2.
\end{align*}
Hence $\mathrm{Var(Z)}\rightarrow\infty$ with $n$ since 
$\mathrm{EIG}\rightarrow\infty$ does, and   
\eqref{eq:CV-bound} also allows us to deduce that for the ratio
criterion $R$, $\lim \sup R^2\leq (1-q)/q$, and hence from
\eqref{eq:BA-bound}, $\lim \inf \widetilde{\mathrm{BA}} \geq q$, as should be 
the case because $\widetilde{\mathrm{BA}}$ is an upper bound on $\mathrm{BA}$.  
\end{example}

The example above is unusual in that the posterior fails to concentrate at
the true value with probability $q$.  
However, if a Bernstein-von Mises theorem holds, then we can say
something about the $\mathrm{EIG}$, $\mathrm{BA}$ and ratio 
criteria asymptotically.  
As in the example above, consider for the moment a fixed design
(suppressing $x$) with $y_1,\dots, y_n$ independent and identically distributed
from a distribution with parameter $\theta\in \mathbb{R}^d$ having prior density $p(\theta)$.  

Assume the conditions required for Theorem 2.1 of \cite{clarke+b90} to hold,
together with equality of the two information matrices appearing 
in the theorem, and we write the information matrix as  
$\mathcal{I}(\theta)$.
We also assume below it is valid to average limits over the prior.
Then the log relative belief ratio satisfies
\begin{align}
 Z & =\frac{d}{2}\log n+c(\theta)-\frac{W_n}{2}+o_p(1), \label{clarke-barron}
\end{align}
where $c(\theta):=\frac{1}{2}\log\det\mathcal{I}(\theta)-\frac{d}{2}\log(2\pi)
-\log p(\theta)$ and
$W_n:=S_n^{\top}\mathcal{I}(\theta)^{-1}S_n$, with
$S_n:=n^{-1/2}\sum_{i=1}^{n}\nabla_{\theta}\log p(y_i\mid\theta)$
the scaled score.   In this form, \eqref{clarke-barron} is the second limit of
Theorem~2.1 of \cite{clarke+b90} rearranged, and the $o_p(1)$ term
converges to zero in $L^1$ given $\theta$. Conditionally on almost
every $\theta$, $W_n\rightarrow\chi^2_d$ in distribution by the central
limit theorem for the score, and since the limit does not depend
on $\theta$, $W_n$ is asymptotically independent of $\theta$. 
Averaging over the prior gives the expansion
\begin{align*}
  \mathrm{EIG}=\frac{d}{2}\log\frac{n}{2\pi e}
  +\frac{1}{2}\,E\{\log\det\mathcal{I}(\theta)\}-E\{\log p(\theta)\}+o(1)
  \rightarrow\infty,
\end{align*}
an expansion of the same form as the one used by \citet{long+stw13} for fast
approximate computation of the EIG. 
In addition, 
$Z\rightarrow\infty$ in probability, so $\mathrm{BA}\rightarrow0$.  Under further regularity assumptions,
$\mathrm{Var}(Z)\rightarrow\mathrm{Var}\{c(\theta)\}+d/2<\infty$, so
$R\sim (\mathrm{Var}\{c(\theta)\}+d/2)^{-1/2}\,\mathrm{EIG}$.  This means that
we cannot have $\mathrm{EIG}\rightarrow\infty$ unless $R\rightarrow\infty$ as well.
The upper bound
$\widetilde{\mathrm{BA}}=(1+R^{2})^{-1}\rightarrow 0$, although
at the slow rate of $(\log n)^{-2}$.  

The discussion above does not directly apply to the policy-based designs
used in the rest of the paper, where the design policy
$\eta_\varphi$ is for a fixed number of experiments $T$ (Section~\ref{sec:policy-design}),
so that it isn't clear what $n$ is in the asymptotic results. It is natural
to treat the whole rollout of the $T$ experiments
given $\varphi$ as a single observation, and to have $n$ independent
replicates of this.  Write the independent trajectories
$y^{(1)},\dots,y^{(n)}$ as $y^{(j)}=(y_1^{(j)},\dots,y_T^{(j)})\sim
p(y\mid\varphi,\theta)$ given a common $\theta$. Since these are iid
given $\theta$, with $p(y\mid\varphi,\theta)$ playing the role of the
density for a single observation, the iid result above applies directly with
$n$ now the number of replicate experiments. Writing $\mathcal{I}(\varphi,\theta)$
for the information matrix of one replicate trajectory under $\varphi$,
and assuming the conditions of Theorem~2.1 of \cite{clarke+b90} hold 
together with equality of its two
information matrices, we obtain the expansion \eqref{clarke-barron} with
$\theta$, $\mathcal{I}(\theta)$ and $S_n$ reinterpreted as above and
$S_n:=n^{-1/2}\sum_{j=1}^n\nabla_\theta\log p(y^{(j)}\mid\varphi,\theta)$.
Here $c(\theta)$ becomes $c(\varphi,\theta)$, defined in the same way
with $\mathcal{I}(\varphi,\theta)$ in place of $\mathcal{I}(\theta)$.
Consequently $\mathrm{EIG}(\varphi)\to\infty$, $\mathrm{BA}(\varphi)\to0$
and $R(\varphi)\to\infty$ as $n\to\infty$ for fixed $\varphi$, with
$R(\varphi)=\{\tfrac{d}{2}\log n+a(\varphi)\}/s(\varphi)+o(1)$, where
$a(\varphi)$ is the policy-dependent constant in the $\mathrm{EIG}$
expansion and $s(\varphi)^2:=\mathrm{Var}\{c(\varphi,\theta)\}+d/2$.
Thus for large $n$ the EIG policy is governed by $a(\varphi)$ alone,
while the BA/ratio policy is governed by $s(\varphi)$, favouring policies
for which $\log\det\mathcal{I}(\varphi,\theta)$ is less variable over the
prior.  This asymptotic result is used here only to illustrate
why the ratio criterion is
sensitive to the variability of $\log\det\mathcal{I}(\varphi,\theta)$ over
the prior, unlike the EIG criterion.  Training our policies uses $n=1$, and in this case 
we need to study $\mathrm{Var}(Z(y,\varphi,\theta))$ directly. Another asymptotic
regime that could be studied would let $T\rightarrow\infty$, but this
is complicated because of the dependence induced by the adaptive design.

\subsection{Policy-based design with bias against}  \label{sec:policy-design}

Throughout the rest of the paper we work with the policy formulation for amortized
sequential design, where for a sequence of experiments indexed by $t=1,\dots, T$, we
make a design choice
$x_t$ at stage $t$ which 
is determined from ${\cal I}_{t-1}$ by a parametrized policy 
$\eta_\varphi$.  As in the EIG discussion above we write
$p(y|\varphi,\theta)=\prod_{t=1}^T p(y_t|x_t=\eta_\varphi({\cal 
I}_{t-1}),\theta)$ and $p(y|\varphi)=E_{\theta\sim p(\theta)}(p(y|\varphi,\theta))$, 
and we let $Z(y,\varphi,\theta):=\log p(y|\varphi,\theta)-\log p(y|\varphi)$.  
With this notation the optimization target we consider for optimizing
the upper bound $\widetilde{\mathrm{BA}}(x)$ is 
\begin{align}
  L_\alpha(\varphi) & = E(Z(y,\varphi,\theta)) - \alpha\, \mathrm{Var}(Z(y,\varphi,\theta)),
  \label{eq:L-alpha-phi}
\end{align}
and the ratio criterion is
\begin{align}
  R(\varphi) & = \frac{E(Z(y,\varphi,\theta))}{\sqrt{\mathrm{Var}(Z(y,\varphi,\theta))}}.  
  \label{eq:R-phi}
\end{align}
The bound \eqref{eq:BA-threshold}, the ratio criteria \eqref{eq:R-criterion} 
and \eqref{eq:R-t}, and the correspondence between ratio and penalized 
criteria all carry over to the policy formulation with $\varphi$ in place of 
$x$, since these results depend on the design only through the mean and 
variance of the log relative belief ratio.  See 
Remark~\ref{rem:policy-form} in Appendix~\ref{app:Lagrangian}.

\section{Computational issues}\label{sec:computation}

Both $L_\alpha(\varphi)$ and $R(\varphi)$ contain the intractable prior 
predictive density $p(y|\varphi)$ and hence must be approximated.  We first 
describe a Monte Carlo estimator of the objective, of the kind employed by 
\cite{foster+imr21} for EIG estimation, and then describe how it is used within 
a stochastic gradient optimization in which the Monte Carlo samples are 
refreshed at every step.

\subsection{Monte Carlo estimator of the objective}\label{sec:mc-objective}

Assume that a simulation from the joint Bayesian model can be expressed 
as a deterministic differentiable transformation of random 
variables $(\eta,\rho)$, written as 
$(\theta,y)=(G_1(\eta),G_2(\eta,\rho,\varphi))$.  
It is assumed that the distribution
of $(\eta,\rho)$ does not depend on $\varphi$.  
For example, they
could be standard normal variates.  
This kind of reparametrization 
is available for most of the 
priors and likelihoods used in our examples.

Draw a set of $J$ independent $(\rho,\eta)$ pairs,  
$(\eta^{(j)},\rho^{(j)})$, $j=1,\ldots,J$, and a further set of $L$ 
independent draws for $\eta$, $\widetilde\eta^{(l)}$, $l=1,\ldots,L$.  Define 
$\theta^{(j)}=G_1(\eta^{(j)})$, $y^{(j)}=G_2(\eta^{(j)},\rho^{(j)},\varphi)$, and 
$\widetilde\theta^{(l)}=G_1(\widetilde\eta^{(l)})$.  We estimate $p(y^{(j)}|\varphi)$ by 
\begin{align}
  \widehat p(y^{(j)}|\varphi) & := \frac{1}{L+1}\left\{ p(y^{(j)}|\varphi,\theta^{(j)})+
  \sum_{l=1}^L p(y^{(j)}|\varphi,\widetilde\theta^{(l)})\right\},
  \label{eq:p-hat}
\end{align}
and set 
\begin{align*}
  \widehat Z^{(j)}(\varphi) & := \log p(y^{(j)}|\varphi,\theta^{(j)})-\log\widehat p(y^{(j)}|\varphi).
\end{align*}
Writing 
\begin{align*}
  \overline Z(\varphi) & := \frac{1}{J}\sum_{j=1}^J \widehat Z^{(j)}(\varphi), \qquad
  S(\varphi)  := \frac{1}{J-1}\sum_{j=1}^J \left(\widehat Z^{(j)}(\varphi)-\overline Z(\varphi)\right)^2,
\end{align*}
our Monte Carlo objective for fixed $\alpha\geq 0$ is 
\begin{align}
  \widehat L_\alpha(\varphi) & = \overline Z(\varphi) - \alpha\, S(\varphi).
  \label{eq:L-alpha-hat}
\end{align}
with the corresponding estimate of the ratio criterion being 
$\widehat R(\varphi):=\overline Z(\varphi)/\sqrt{S(\varphi)}$.  The 
reparametrization makes $\widehat L_\alpha(\varphi)$ and $\widehat R(\varphi)$ 
differentiable in $\varphi$ for a fixed realization of the noise variables 
$\eta^{(j)},\rho^{(j)},\widetilde\eta^{(l)}$, so that gradients can be obtained by 
automatic differentiation and the policy can be efficiently 
trained by stochastic gradient ascent.  The mean term 
$\overline Z(\varphi)$ is a direct Monte Carlo estimate of 
the lower-bound estimator \eqref{EIG-est12}, so that with $\alpha=0$ 
\eqref{eq:L-alpha-hat} reduces to an estimate of the PCE bound 
similar to the one used in \cite{foster+imr21}.  For $\alpha>0$, the additional term penalizes the Monte Carlo variance of the estimated log relative belief ratio.

\subsection{Stochastic optimization with refreshed Monte Carlo samples}
\label{sec:refreshed}

At each stochastic-gradient update we draw a fresh
Monte Carlo batch, form the estimator \eqref{eq:L-alpha-hat}, take one gradient step, and then discard the batch.  Thus the optimization uses refreshed Monte Carlo approximations of the population objective, rather than a single fixed randomized approximation.

More precisely, with current iterate $\varphi^{(s)}$ and step size $\gamma_s$,
we update
\begin{align}
  \varphi^{(s+1)}
  &=
  \varphi^{(s)}
  +
  \gamma_s
  \nabla_\varphi
  \widehat L_\alpha(\varphi)
  \big|_{\varphi=\varphi^{(s)}},
  \label{eq:sgd-step}
\end{align}
where the Monte Carlo variables entering $\widehat L_\alpha$ are drawn
independently at each step $s$.  In our implementation the outer batch of size $J$ can be accumulated over several smaller minibatches, which reduces memory requirements.  The gradients of the minibatch contributions are accumulated before the parameter update.  When more than one minibatch is used, the variance
term $S(\varphi)$ is computed within each minibatch and averaged across
minibatches.  We use the unbiased form of the sample variance estimator 
in all cases, so this does not change the quantity being estimated, which
will be the variance of the log relative belief ratio when $L\rightarrow\infty$.  
Using a single minibatch of size $J$ recovers the expression 
given above.  

The gradient of $\widehat L_\alpha(\varphi)$ is not an unbiased estimator of 
$\nabla_\varphi L_\alpha(\varphi)$ for finite $J$ and $L$.  The optimization 
\eqref{eq:sgd-step} is therefore a stochastic gradient scheme with biased 
gradients, with the bias vanishing as $J,L\to\infty$.  Despite this bias, we 
find that small values of $J$ and $L$ are sufficient to learn good policies, 
while keeping computational and memory requirements low. 
A similar approximation, in which a biased 
but inexpensive gradient is used based on a batch approximation 
in place of a more costly unbiased one, was 
used successfully by \cite{chen+no25} in a variational inference context.  We 
use the Adam optimizer \citep{kingma+b15} for the updates \eqref{eq:sgd-step}. In some examples we also use periodic model
checkpointing for numerical stability.  The example-specific training details
are given in Appendix~\ref{app: procedure hyper}, where we also explore
sensitivity of the optimization to the tuning parameters in the stochastic
gradient algorithm.

\subsection{Choice of the penalty parameter}\label{sec:grid}

The discussion in Section~\ref{sec:penalized} shows that the penalty parameter
$\alpha$ has an evidential interpretation.  Varying $\alpha$ traces out designs
with different degrees of conservativeness in controlling misleading evidence.
Equivalently, different values of $\alpha$ can be viewed as corresponding to
different threshold events of the form $\{Z\leq t\}$ whose bounds are being
tightened.  In the examples below we consider a grid of possible values for
$\alpha$, 
\[
  \alpha \in \{0.5,0.75,1.0,1.25,1.5\}.
\]
For each value of $\alpha$, we train a policy by maximizing the Monte Carlo
objective \eqref{eq:L-alpha-hat}.  The resulting policies are not interpreted as
ordinary hyperparameter perturbations of a single objective.  Rather, they form
a path of designs from less conservative to more conservative policies, with
larger $\alpha$ placing greater emphasis on reducing the variability of the log
relative belief ratio. Finally, if we wish to choose a single $\alpha$, we do so by selecting the $\alpha$ attaining the highest estimated ratio $\widehat R(\varphi):=\overline Z(\varphi)/\sqrt{S(\varphi)}$ on average, evaluated on a random batch sample.  This corresponds to the tightest bound on the bias against.

For the source-location and hyperbolic-discounting examples of Sections 5.1 and 5.2 respectively, we use the
policy-network architecture of \cite{foster+imr21}.  The network embeds each previous design-outcome pair and aggregates these embeddings, so
that the proposed design depends on the accumulated experimental history rather than on an arbitrary ordering of the history elements.  For the pizza choice example of Section 5.3, we use the same history-aggregation structure as in the hyperbolic-discounting example, with input and output dimensions and hyperparameters adjusted to the design space.

\subsection{Gumbel-Softmax relaxation}
\label{sec:discrete-relaxation}

Our discussion so far assumes that the 
outcome $y$ can be written as a differentiable function $y=G_2(\eta,\rho,\varphi)$ 
with $\eta,\rho$ random variables having 
distributions not depending on $\varphi$, so that 
$\widehat L_\alpha(\varphi)$ depends smoothly on $\varphi$ for fixed 
draws of $(\eta,\rho)$.  When $y$ is discrete this does not hold 
(as in the hyperbolic discounting example of 
Section~5.2 below, where the response is Bernoulli). For these examples
we use differentiable relaxations during training
\citep{jang2016categorical,maddison2016concrete}.  Bernoulli or categorical
observations are replaced by Gumbel-Softmax variables, so that the
simulated outcome is a differentiable function of the policy parameter and
random noise.  The corresponding approximation is used in the Monte Carlo
objective during training. A description of the Gumbel-softmax trick is given in Appendix \ref{app:gumbel}. 

The Gumbel-softmax relaxation is used only for optimization and policy net 
training.  For the final design-trace
evaluations, we use the original discrete simulator. 
Thus the learned policy is trained using a smooth approximation 
that enables gradient-based optimization, while the reported 
adaptive design traces are generated under the intended discrete 
data-generating model. The precise simulator, likelihood
and any example-specific training stabilization are 
described in the individual examples.

\section{Examples}\label{sec:example}

We consider three examples. The first two are based on examples in \cite{foster+imr21}, while the third is a discrete choice experiment on a pizza choice. In each example we compare a policy trained for EIG with policies trained using the BA objective \eqref{eq:L-alpha-hat}. For the location-finding example the BA policies are trained from random initialization.
For the hyperbolic-discounting and pizza choice examples, we
use a common EIG warm start.  Starting from this same warm start, we then continue fine-tuning for additional epochs under either the EIG objective or the BA objective, resulting in the EIG and BA policies respectively that
are compared in the Figures. 

For visualization of adaptive design traces, we use the same general procedure
in all three examples.  For a fixed parameter value $\theta$, each policy is run
repeatedly, producing a sample of stochastic design traces.  We treat each trace
as an ordered sequence of designs.  If
$X^{(m)}=(x_1^{(m)},\ldots,x_T^{(m)})$ denotes the $m$th trace, we stack it into
a vector $v^{(m)}=\operatorname{vec}(X^{(m)})$ and define the representative
trace by the geometric median
\[
\widehat v
=
\arg\min_v
\sum_m \|v-v^{(m)}\|_2 .
\]
The vector $\widehat v$ is then reshaped to obtain the representative design
trace.  We compute this geometric median using the Weiszfeld iteration algorithm
\citep{weiszfeld37}.  The use of representative traces over many simulations
for the same $\theta$ is used to remove noise in order to make the 
differences between the design policies easier to recognize and understand.  

To compare policies trained by different objectives, we compute the dynamic time warping (DTW) distance between their representative traces \citep{JSSv031i07}.  We then sample a set of parameter values from the prior and display representative traces for parameter values corresponding to selected $(0\%,20\%,40\%,60\%,80\%,100\%)$ quantiles of the DTW distance, together with several manually chosen parameter values that illustrate different inferential regimes. The detailed hyperparameters for training and visualization are presented in Appendix~\ref{app: procedure hyper}.  Hereafter we call the policy trained by optimizing objective \eqref{EIG-est12} the EIG policy, and the policy trained by optimizing objective \eqref{eq:L-alpha-hat} the BA policy.  For the pizza choice example in Section \ref{subsec:discrete-choice}, the design at each step contains
two-dimensional attributes of three pizza options.  We compute the representative trace separately for each option here. We sum the three DTW distances for the options to get the overall DTW distance for selecting parameter values based on the quantiles. All codes and results are saved in \url{https://github.com/DZCQs/DAD-bias_against}.

\subsection{Source location finding}\label{subsec:source-location-finding}

We consider first 
the source location finding example given in \cite{foster+imr21}, based on a 
related model in \cite{sheng+h05}.  There are two sources in two-dimensional space, which
send signals where the intensity decays with distance according to an inverse squared law.  
The locations of the first and second source will be denoted by $\Theta_1=(\theta_1,\theta_2)^\top$
and $\Theta_2=(\theta_3,\theta_4)^\top$, with the unknown parameter to be inferred
$\theta=(\Theta_1^\top,\Theta_2^\top)^\top$.  A multivariate standard normal prior is assumed for $\theta$.  We can take $T$ measurements at different times
from a sensor that can be moved at each time.  The sensor location is a design variable, so we have to choose the location $x_t=(x_{t1},x_{t2})^\top$ 
adaptively at each time given
the previous locations and measurements.  At time $t$ we observe $y_t$, a noisy measurement
informative about the total log-intensity of the signal from the two sources.  Following
\cite{foster+imr21}, the total intensity at sensor location $x$ is
$$\mu(\theta,x)=0.1+\sum_{k=1}^2 \frac{1}{10^{-4}+\|\Theta_k-x\|_2^{2}},$$
and we model the measurement on the log-intensity scale with additive heavy-tailed noise,
$$y_t = \log\mu(\theta,x_t) + \sigma\,\varepsilon_t, \qquad \varepsilon_t\sim t_{\nu},$$
where $\varepsilon_t$ has a Student-$t$ distribution with $\nu=3$ degrees of freedom and
$\sigma=0.7$.  The heavy-tailed observation noise differs from the Gaussian noise used by
\cite{foster+imr21} and makes the inference problem more susceptible to occasional
misleading observations, which is the regime of interest for the bias against criterion.

\begin{figure}[!htbp]
    \centering
    \includegraphics[width=0.5\linewidth]{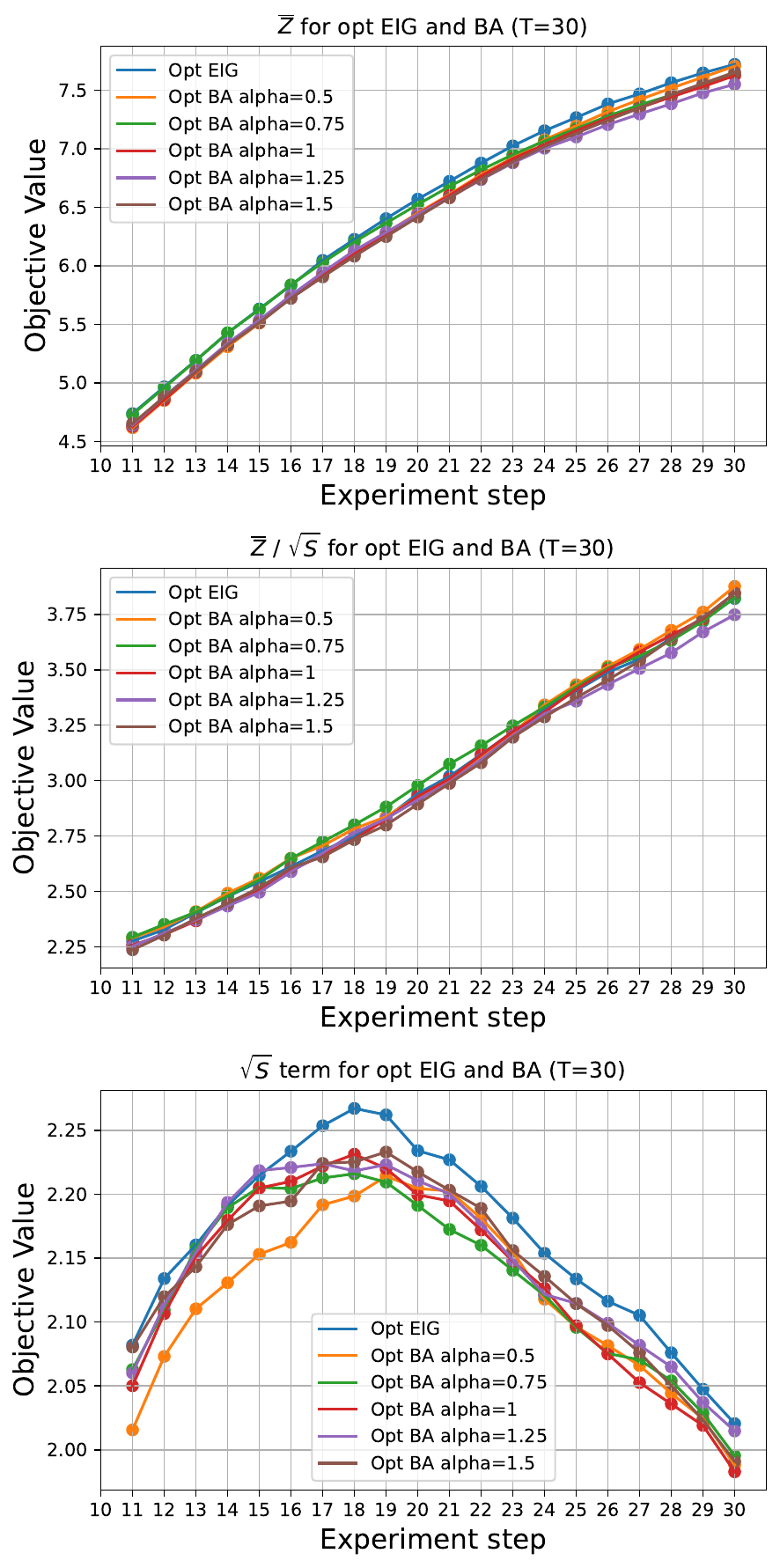}
    \caption{\small Evaluated $\overline{Z}$ (top), $\overline{Z}/\sqrt{S}$ (middle) and $\sqrt{S}$ (bottom) over $T=30$ experimental steps for location finding example for EIG policy and BA policies with $\alpha=0.5,0.75,1.0,1.25,1.5$.}
\label{fig loss: loc}
\end{figure}

Figure \ref{fig loss: loc} compares the EIG policy with BA policies. For each trained policy, we run a Monte Carlo evaluation batch and compute the sequential estimates $\overline Z(\varphi)$, $S(\varphi)$, and
$\widehat R(\varphi)=\overline Z(\varphi)/\sqrt{S(\varphi)}$ after each experimental step. The EIG policy gives the largest value of $\overline Z(\varphi)$, as expected, since it is trained directly to maximize the EIG lower bound.  The BA policies sacrifice some value of $\overline Z(\varphi)$, but reduce the estimated variability $S(\varphi)$ of the log relative belief ratio.  As a result, for suitable values of $\alpha$ the BA policy achieves a larger ratio
$\widehat R(\varphi)$ over most experimental steps. This is consistent with the comparison in
Section~\ref{sec:eig-r-comparison}: in this example the BA objective changes
the policy mainly through the design-induced variability of $Z$, reducing
$S(\varphi)$ while accepting some reduction in $\overline Z(\varphi)$.
  
Figure~\ref{fig design trace random: loc} plots representative
traces for different parameter values chosen from 
simulations from the prior based on quantiles of DTW distance 
as discussed above.  
For the BA policy we use $\alpha=0.75$, selected
according to the procedure described in Section~\ref{sec:grid}.
The selection of the parameter values by quantiles of DTW distance ensures 
that the cases range from those where the policies behave most similarly,
to those where they are most different.  
Both policies adapt to the underlying source locations and tend to take smaller
steps when the sensor is close to the signal sources.  
Although both policies show similar behaviour, in 5 of the 6 plots the
BA policy designs explore the design space more widely.  
Figure~\ref{fig design trace manual: loc} compares the policies for
two manually chosen parameter configurations, one where the two sources
are far apart, and one where they are close together.  When the sources
are close together, the EIG policy is more dispersed, 
whereas for the BA policy the opposite behaviour is observed.

\begin{figure}[!htbp]
    \centering
    \includegraphics[width=0.9\linewidth]{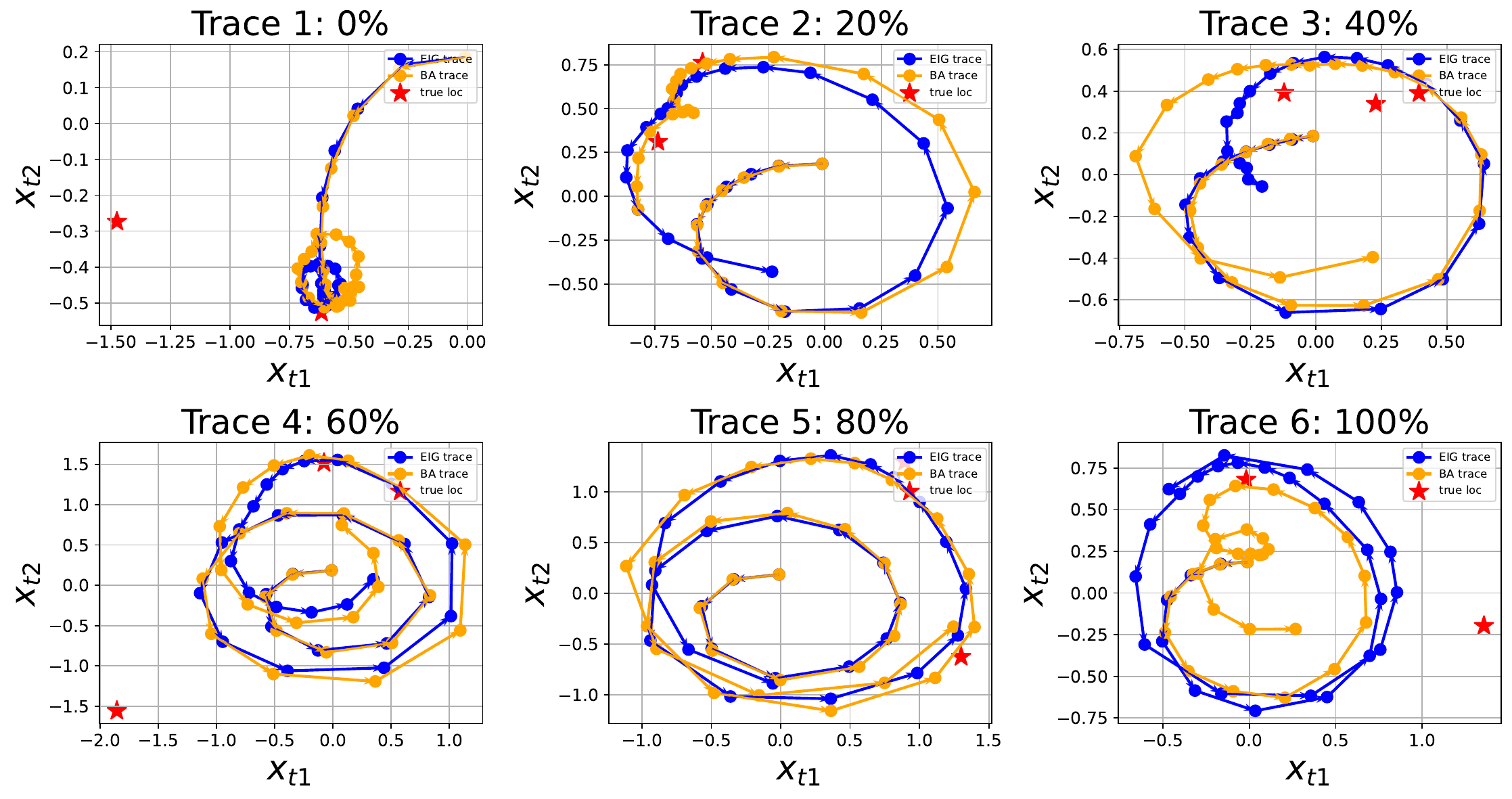}
    \caption{\small Representative design traces for the source-location example. For each sampled value of $\theta$, we run the EIG and BA policies repeatedly and summarize the resulting stochastic traces by geometric medians. The panels show parameter values for which the DTW distance between the representative traces corresponds to the selected quantiles $(0\%,20\%,40\%,60\%,80\%,100\%)$ over a sample from the prior. The red star-like dots are the true signal sources.  The blue arrows are representative design traces from the EIG policy, and the orange arrows are representative design traces from the BA policy.}
\label{fig design trace random: loc}
\end{figure}

\begin{figure}[!htbp]
    \centering
    \includegraphics[width=0.6\linewidth]{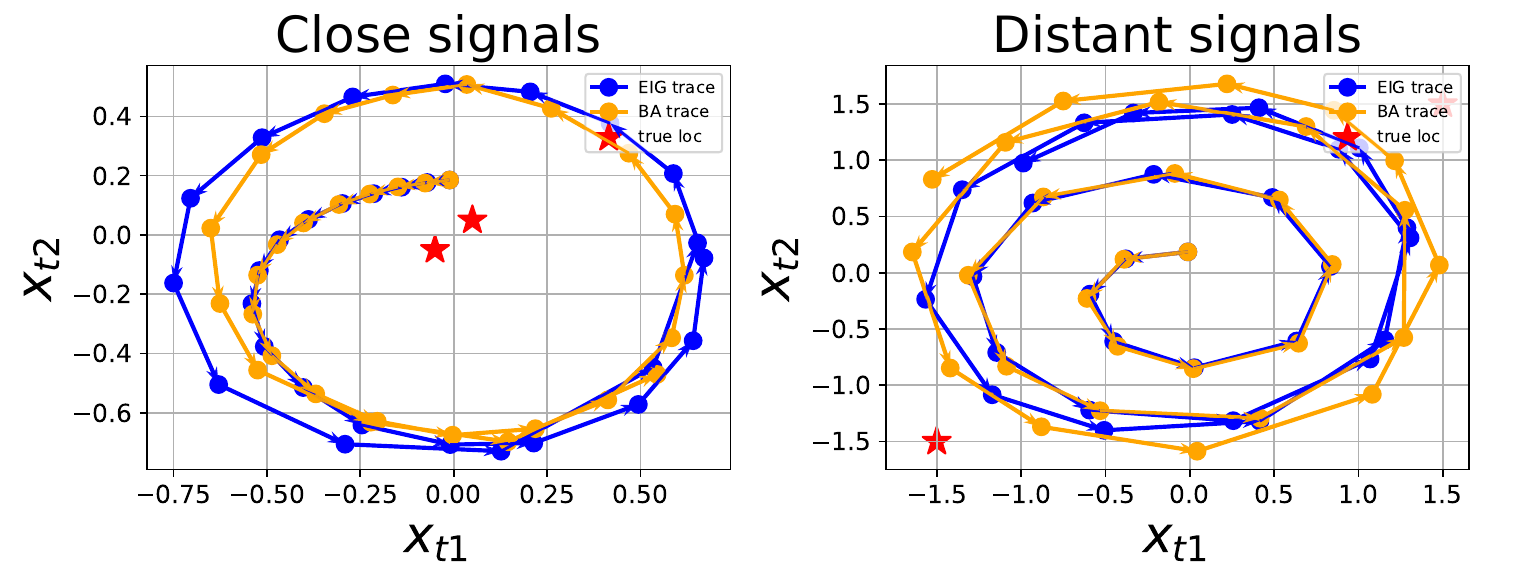}
    \caption{\small Representative design traces for the source-location example at manually selected parameter values.  For each value of $\theta$, we run the EIG and BA policies repeatedly and summarize the resulting stochastic traces by geometric medians.  The red star-like dots are the true signal sources.  The blue arrows are representative design traces from the EIG policy, and the orange arrows are representative design traces from the BA policy.}
\label{fig design trace manual: loc}
\end{figure}

\subsection{Hyperbolic temporal discounting}\label{subsec:hyperbolic}

Next we consider a design problem for a hyperbolic discounting model discussed in
\cite{foster+imr21}.  The purpose of experiments like this one is to study
how much people value immediate compared to delayed rewards.  
The example illustrates use of our approach in
a setting where the data are discrete.  Participants in a study were asked to choose
between the two propositions ``$\pounds R$ today'' or ``$\pounds 100$ in $D$ days'',
with values $V_0$ and $V_1$ respectively, where $V_0=R$ and $V_1=100/(1+k D)$ and
$k>0$ is a discounting parameter. The design variables are $x=(R,D)$, the immediate
reward and the delay.  We observe $y\in\{0,1\}$, equal to $1$ if the respondent chooses
the immediate reward $V_0$ rather than the delayed reward $V_1$, and it is assumed that
$$P(y=1\mid\theta,x)=\epsilon+(0.5-\epsilon) \left\{1+\operatorname{erf}\!\left(\frac{V_0-V_1}{a}\right)\right\},$$
where $\theta=(k,a)$ is the model parameter, $\operatorname{erf}$ is the error
function, $a>0$ controls the sharpness of the choice probability, and the lapse rate
is fixed at $\epsilon=0.01$, so that the choice probability ranges between $0.01$ and
$0.99$.  We place a log-normal prior on the discount parameter, $\log k\sim N(-4.25,1.5^2)$,
and a half-normal prior on the sharpness, $a \sim N_+(0,2^2)$, with $k$ and $a$
independent a priori.

In the policy-based formulation, the network does not output the design variables
$x=(R,D)$ directly. Instead it outputs two unconstrained real values which are
transformed onto the valid ranges of the reward and delay.  Writing
$(r,d)$ for the raw network outputs, the immediate reward is obtained as
$R=100\,F_{3}(r)$, where $F_{3}$ is the cumulative distribution function of a
Student-$t$ distribution with three degrees of freedom, mapping $r$ into the
interval $(0,100)$.  The delay is obtained as $D=\exp\left\{d-\mathrm{E}(\log k)\right\}$,
where $E(\log k)=-4.25$, which
centres the delay on a scale commensurate with the discount rate.  We consider, similar
to \cite{foster+imr21}, an adaptive design for this problem with a sequence of $T=20$ steps, choosing the design variables and making an observation $y_t$ at each step,
$t=1,\dots,T$.

\begin{figure}[!htbp]
    \centering
    \includegraphics[width=0.5\linewidth]{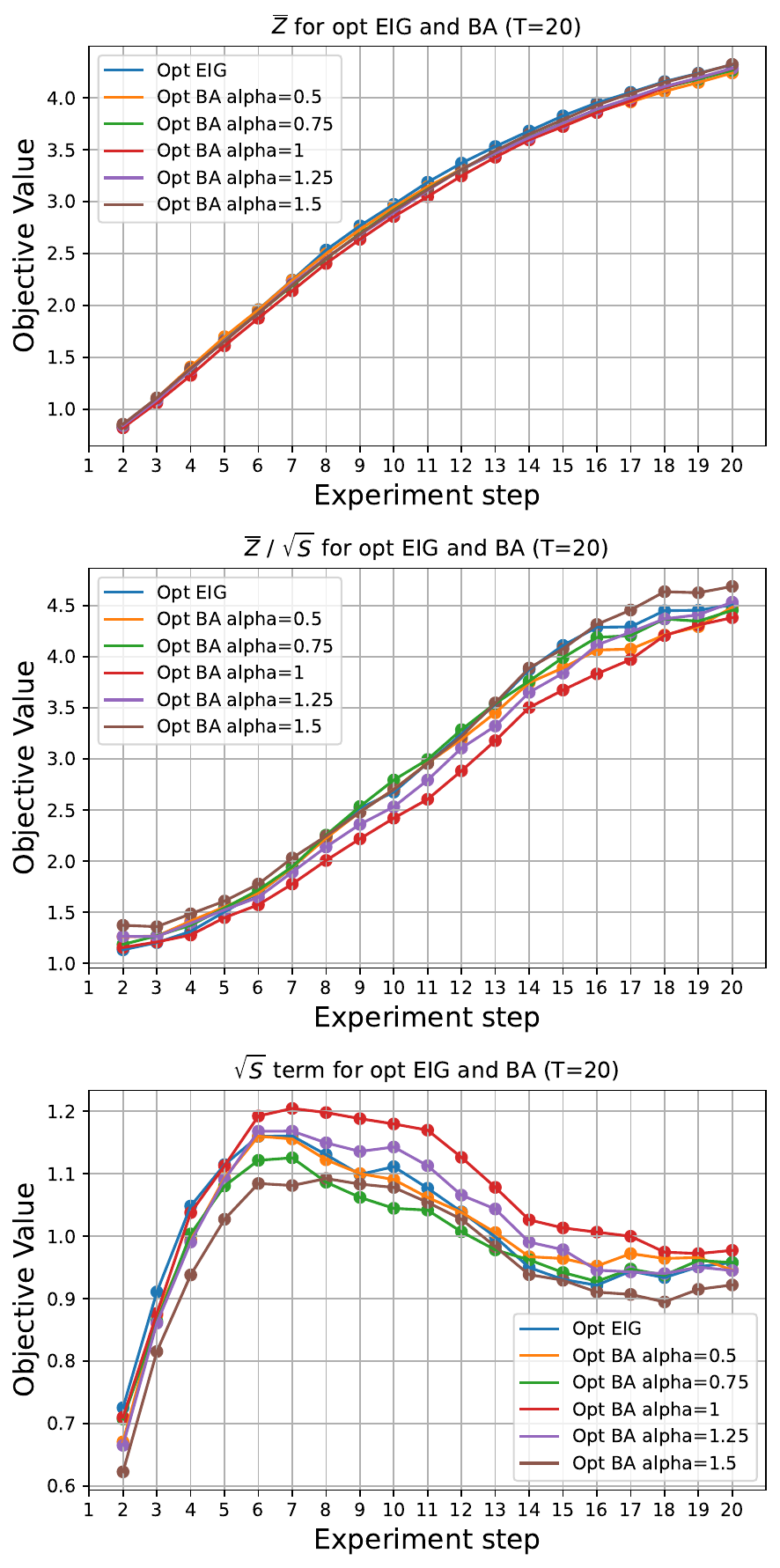}
    \caption{\small Evaluated $\overline{Z}$ (top), $\overline{Z}/\sqrt{S}$ (middle) and $\sqrt{S}$ (bottom) over $T=20$ experimental steps for hyperbolic
    discounting example for EIG policy and BA policies with $\alpha=0.5,0.75,1.0,1.25,1.5$.}
\label{fig loss: hyper}
\end{figure}

Figure~\ref{fig loss: hyper} compares the EIG policy with BA
policies trained at different values of $\alpha$. Similar to the location finding example, the top panel shows the Monte Carlo estimate $\overline Z(\varphi)$, the middle panel shows the ratio
$\widehat R(\varphi)=\overline Z(\varphi)/\sqrt{S(\varphi)}$ and the bottom panel shows the estimated standard deviation $\sqrt{S(\varphi)}$, all evaluated sequentially over the $T=20$ experimental steps.  The values of
$\overline Z(\varphi)$ increase steadily for all policies and are very similar
across the EIG and BA policies. The BA objective does not lead
to a substantial loss in the EIG lower-bound estimate.  The BA policies with
$\alpha=0.5,0.75$ and $1.5$ have smaller values of $\sqrt{S(\varphi)}$ over
most experimental steps, and this reduction in variability leads to larger
values of $\widehat R(\varphi)$, especially in the later stages of the
experiment.  For the BA policies, the expected log relative
belief ratio is similar to that of the EIG policy, but its variability is
reduced.  

\begin{figure}[!htbp]
    \centering
    \includegraphics[width=0.6\linewidth]{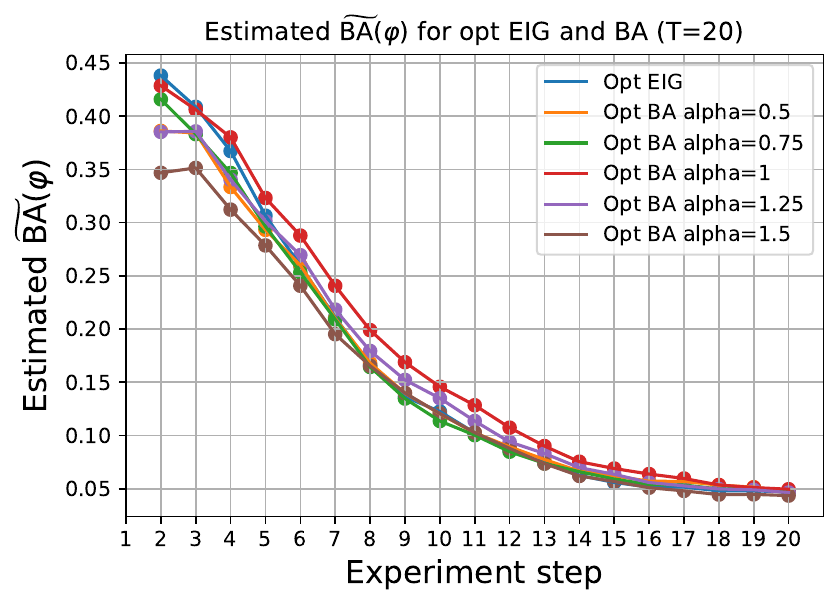}
    \caption{\small Estimated upper bound on bias against for the hyperbolic-discounting example. 
    The plotted quantity is Estimated $\tilde{BA}(\varphi) = \{1+\widehat R(\varphi)^2\}^{-1}$, where
    $\widehat R(\varphi)=\overline Z(\varphi)/\sqrt{S(\varphi)}$ is evaluated
    sequentially over the $T=20$ experimental steps.  Smaller values correspond
    to a tighter upper bound on the bias against.}
\label{fig tildeBA: hyper}
\end{figure}

The corresponding effect on the estimated bias-against upper bound is
shown directly in Figure~\ref{fig tildeBA: hyper}, which presents the same comparison on the scale of the
bias-against upper bound.  Since the bound is a decreasing transformation of
$\widehat R(\varphi)$, policies with larger values of the ratio criterion in
Figure~\ref{fig loss: hyper} correspond to smaller values in
Figure~\ref{fig tildeBA: hyper}. The BA policies reduce the estimated upper bound
on bias against compared to the EIG policy over the $T=20$ experimental steps.

\begin{figure}[!htbp]
    \centering
    \includegraphics[width=0.95\linewidth]{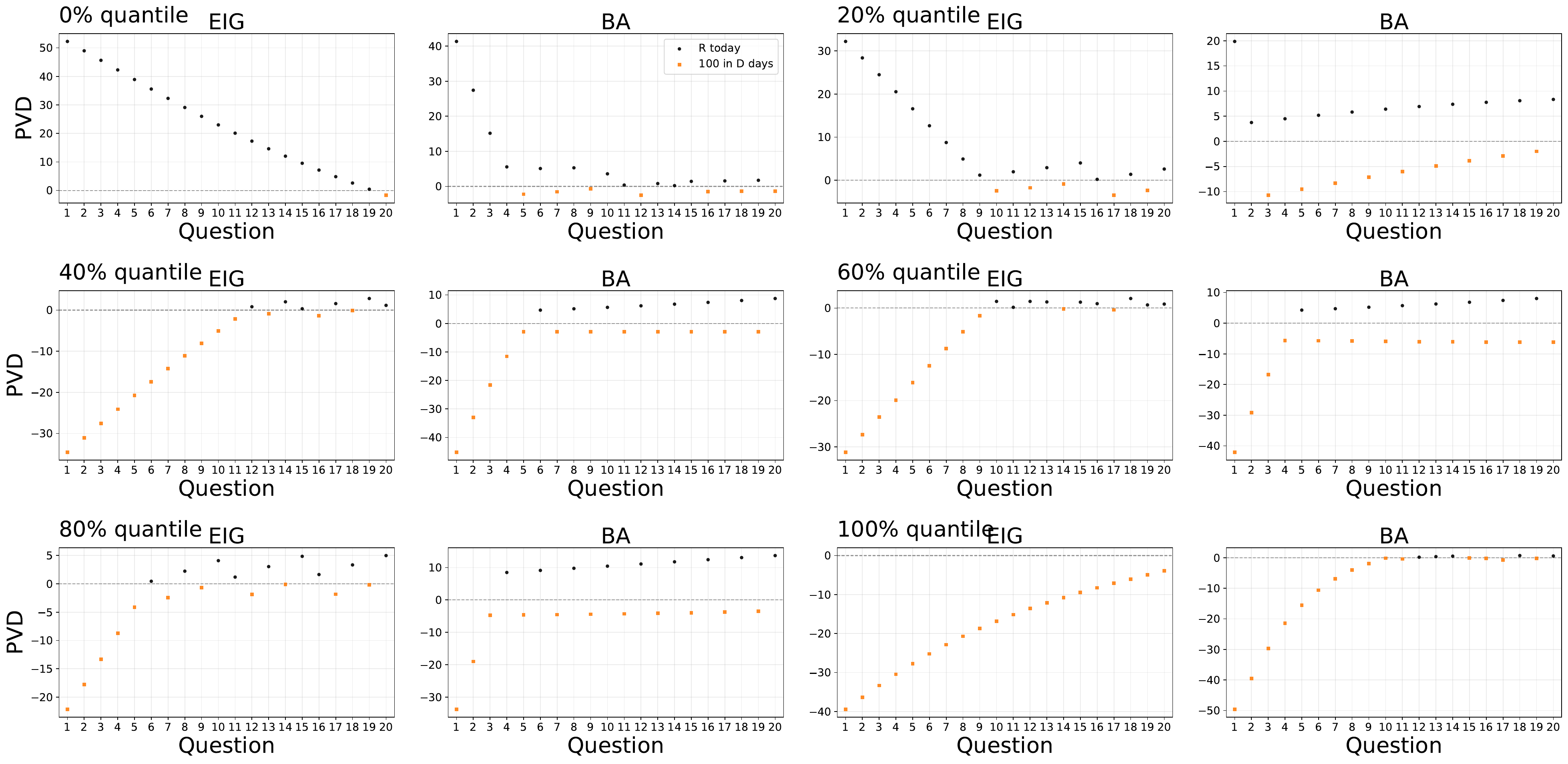}
    \caption{\small Representative subjective-value traces for the hyperbolic discounting example. The panels show parameter values for which the DTW distance between the representative traces corresponds to the selected quantiles $(0\%,20\%,40\%,60\%,80\%,100\%)$ over a sample from the prior.  Black dots indicate tasks for which ``$R$ pounds today'' has higher subjective value, while orange dots indicate tasks for which ``$100$ pounds in $D$ days'' has higher subjective value. The $y$-axis denotes the perceived value difference (PVD).}
\label{fig design trace random: hyper}
\end{figure}

\begin{figure}[!htbp]
    \centering
    \includegraphics[width=0.95\linewidth]{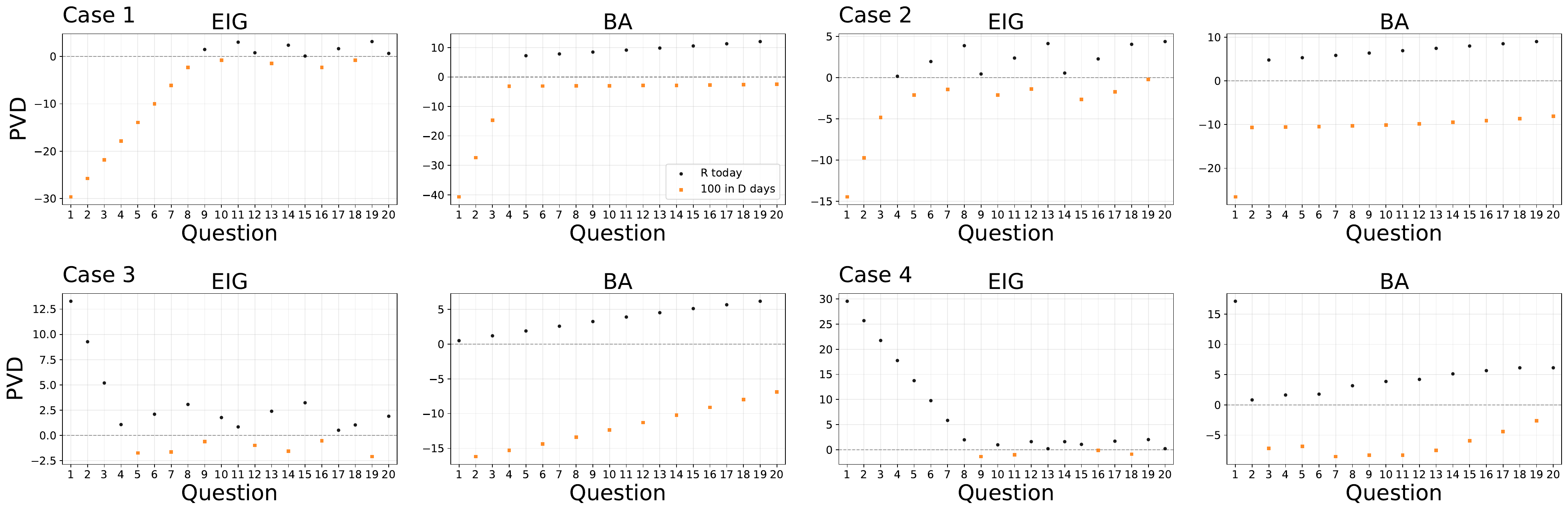}
    \caption{\small Representative subjective-value traces for the hyperbolic discounting example at manually selected parameter values. Black dots indicate tasks for which ``$R$ pounds today'' has higher subjective value, while orange dots indicate tasks for which ``$100$ pounds in $D$ days'' has higher subjective value. The $y$-axis denotes the perceived value difference (PVD). Parameter values $(k,a)$ for Case 1/2/3/4 (Top left/Top right/Bottom left/Bottom right): $(0.005,0.5)/(0.015,1.0)/(0.05,2.0)/(0.10,3.0)$.}
\label{fig design trace manual: hyper}
\end{figure}

Figures~\ref{fig design trace random: hyper} and
\ref{fig design trace manual: hyper} show the same comparison on the subjective
value scale.  For a proposed design $x_t=(D_t,R_t)$ and parameter value
$\theta=(k,a)$, the plotted quantity, the perceived value difference (PVD), is
\[
R_t-\frac{100}{1+kD_t}.
\]
If this is positive, it 
means that the immediate reward has the larger subjective value,
whereas if it is negative it means that the delayed reward has the larger subjective
value.  Values close to zero correspond to difficult questions, for which the
choice probability is close to one half.
In the plots, the BA policy generally shows larger changes in the perceived
value difference at consecutive experimental steps, often alternating between
design choices with positive and negative subjective value.

\subsection{A discrete choice experiment}\label{subsec:discrete-choice}

Finally, we consider an adaptive design problem for a stated-preference discrete choice experiment to infer an individual's optimal option profile and context-dependent choice behavior. The motivating setting here consists of a series of choice tasks in which participants select their most preferred option from three pizzas varying in spiciness and cheesiness. Within a choice task, pizza $j$'s profile is
\[
x_j=(x_{j1},x_{j2})^\top\in[0,1]^2,\qquad j\in\{1,2,3\}.
\]
where the two coordinates denote the degree of spiciness and cheesiness, respectively.  The full design is
\[
x=(x_1^\top,x_2^\top,x_3^\top)^\top\in[0,1]^6.
\]
Preference for a specific flavor is unlikely to be monotonic. Instead, it may peak at a moderate intensity and decline at either extreme. Therefore, we apply  an ideal-point model \citep{cooper1983two,xu2020simultaneous}, which assumes that an option's subjective value (utility, hereafter) increases if it approaches the ideal profile in attribute space. In addition, we relax the above assumption on a fully rational decision maker by incorporating context-dependent effects into the utility specification \citep{rooderkerk2011incorporating}. The most ubiquitous context effect is the attraction effect whereby an option becomes more attractive (i.e., higher utility) when a similar but inferior alternative exists.
% need refine after discussion with Chen.
To set up, the unknown parameter vector for our discrete choice model is
\[
\theta=(\mu_1,\mu_2,\beta,\gamma,\lambda),
\]
where $\mu=(\mu_1,\mu_2)^\top$ is a participant's ideal point (i.e., most preferred pizza profile),
$\beta>0$ controls sensitivity to distance from the ideal point, $\gamma$
controls the strength of the attraction effect, and $\lambda>0$ controls the
locality of the attraction effect. Therefore, pizza $j$'s utility consists of a baseline component determined by its squared Euclidean distance from the ideal point $\mu$, adjusted by the attraction effect $A_j(x,\theta)$ induced by the other options, which is
\[
V_j(x,\theta)
=
-\beta\|x_j-\mu\|_2^2
+
\gamma A_j(x,\theta),
\]
and the choice probabilities are given by the standard softmax function,
\[
P(y=j\mid x,\theta)
=
\frac{\exp\{V_j(x,\theta)\}}
{\sum_{\ell=1}^3\exp\{V_\ell(x,\theta)\}},
\qquad j\in\{1,2,3\}.
\]
Specifically, $A_j(x,\theta)$ captures the total attraction effect of the remaining options $k$ on option $j$, with each option’s influence decaying exponentially at a rate of $\lambda$ with its distance from $j$.
\[
A_j(x,\theta)
=
\sum_{k\neq j}
\exp\{-\lambda\|x_j-x_k\|_2^2\}
D_{jk}^{\mathrm{hard}}(x,\theta),
\]
where $D_{jk}^{\mathrm{hard}}(x,\theta)$ is the indicator that pizza $k$ triggers the attraction effect when evaluating the pizza $j$'s utility. Only when option $k$ is dominated by the target option $j$ on each attribute (i.e., further from the optimal value $\mu_i$) does its presence increase the attractiveness of option $j$.  The hard dominance indicator is therefore
\[
D_{jk}^{\mathrm{hard}}(x,\theta)
=
\prod_{a=1}^2
\mathbf 1\{(x_{ka}-\mu_a)^2-(x_{ja}-\mu_a)^2>0\}.
\]

In total, we use $T=12$ choice tasks for the experiment.  The prior ranges are
\[
\mu_1,\mu_2\sim U(0,1),\qquad
\log\beta\sim U(\log5,\log30),\qquad
\gamma\sim U(-1,1),\qquad
\log\lambda\sim U(\log5,\log80).
\]  Training follows
the common warm-start protocol described in Appendix~\ref{app: procedure hyper}. For policy net training here, we use a smooth policy-output transformation to avoid gradient loss from clipping the design to the closed box $[0,1]^6$; meanwhile, a sigmoid function (with a fixed scale parameter of $c = 30$) is used as a smooth
relaxation of the domination indicator $D^{hard}_{jk} (x;\theta)$ above. Further details are given in Appendix~\ref{app:gumbel}.

\begin{figure}[!htbp]
    \centering
    \includegraphics[width=0.5\linewidth]{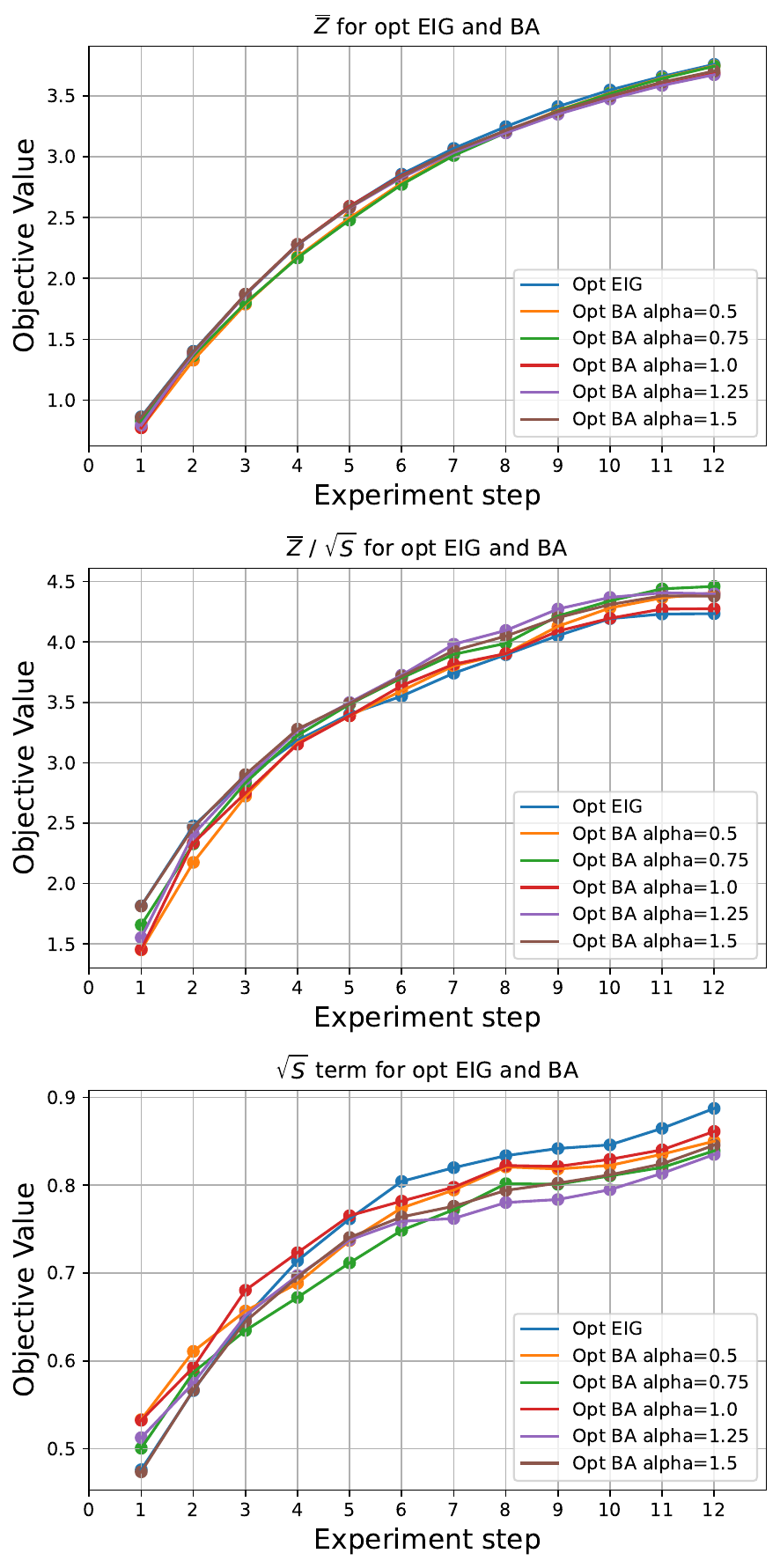}
    \caption{\small 
    Evaluated $\overline{Z}$ (top), $\overline{Z}/\sqrt{S}$ (middle) and $\sqrt{S}$ (bottom) over $T=12$ experimental steps for pizza example for EIG policy and BA policies with $\alpha=0.5,0.75,1.0,1.25,1.5$.}
\label{fig loss: pizza}
\end{figure}

Figure~\ref{fig loss: pizza} compares the EIG policy with BA policies
for the pizza choice example over $T=12$ choice tasks.  The top panel shows
that all policies achieve very similar values of $\overline Z(\varphi)$.  The
EIG policy is slightly higher in the later stages, but the gap is small, so that
training by the BA objective \eqref{eq:L-alpha-hat} does not substantially reduce the EIG lower-bound estimate. The main separation appears in the lower two panels. The BA policies, especially for $\alpha=1.5$, reduce $\sqrt{S(\varphi)}$ over most
experimental steps relative to the EIG policy.  This lower variability leads to
larger values of the ratio
$\widehat R(\varphi)=\overline Z(\varphi)/\sqrt{S(\varphi)}$, particularly
after the first few choice tasks.  Thus, in the pizza example, optimizing the BA objective gives a similar mean
log relative belief but smaller variability, again illustrating the design-dependent variance effect discussed in
Section~\ref{sec:eig-r-comparison}.  In terms of the ratio criterion, we choose
$\alpha=1.5$ as the BA policy for the design-trace evaluation. The difference between the learned policies is
stable across alternative combinations of Monte Carlo evaluation sample sizes; see
Figure~\ref{fig:robustness_pizza} in
Appendix~\ref{app: procedure hyper}.

\begin{figure}[!htbp]
    \centering
    \includegraphics[width=0.9\linewidth]{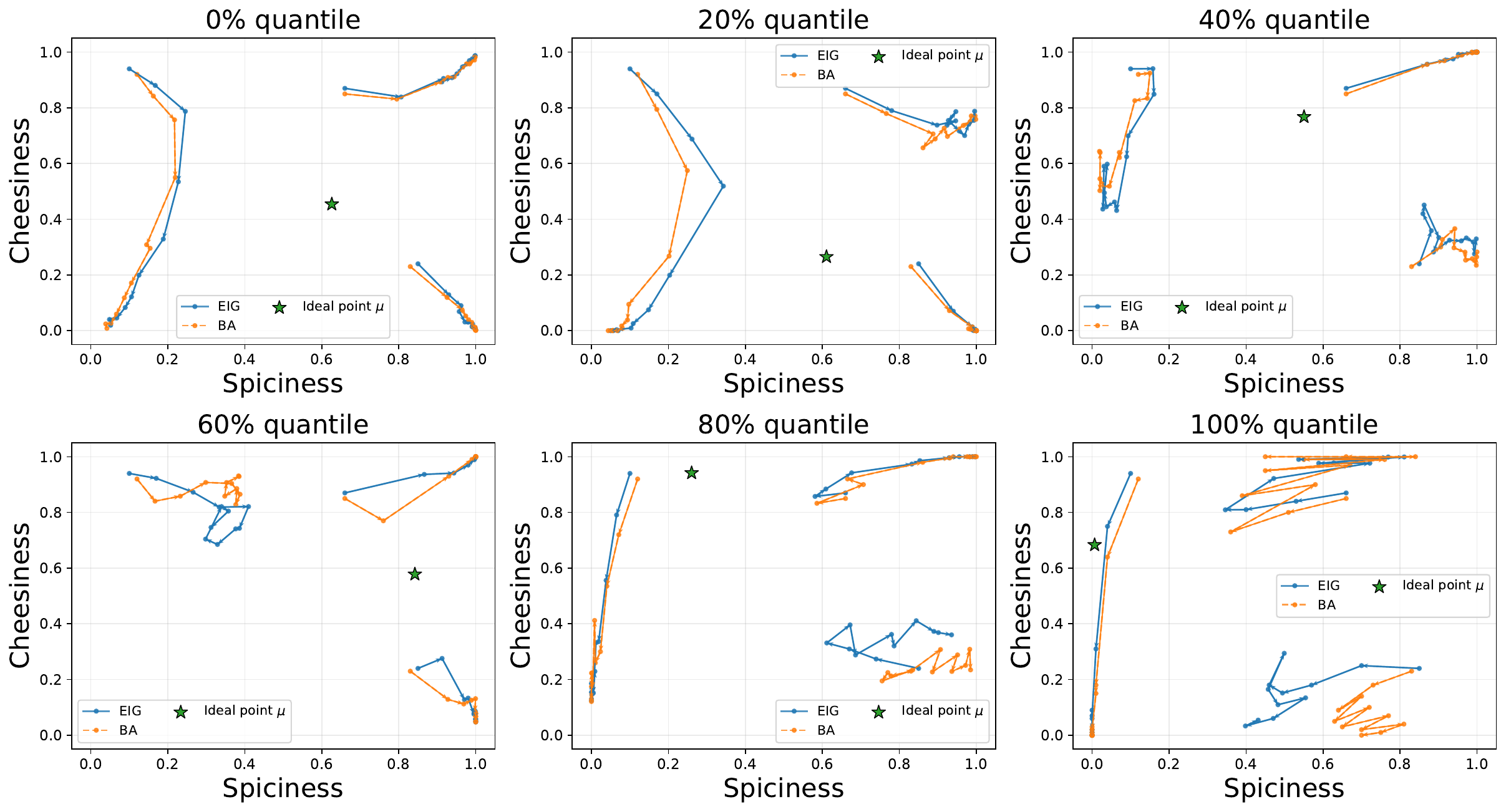}
    \caption{\small Representative design traces for the pizza choice example. For each sampled value of $\theta$, we run the EIG and BA policies repeatedly and summarize the resulting stochastic traces by geometric medians. The panels show parameter values for which DTW distance (summed over options) between the representative traces corresponds to the selected quantiles $(0\%,20\%,40\%,60\%,80\%,100\%)$ over a sample from the prior. The blue arrow curves are representative traces from the EIG policy, and the orange arrow curves are representative traces from the BA policy.  Each panel contains three curves, corresponding to the three pizza options.}
\label{fig design trace random: pizza}
\end{figure}

\begin{figure}[!htbp]
    \centering
    \includegraphics[width=0.6\linewidth]{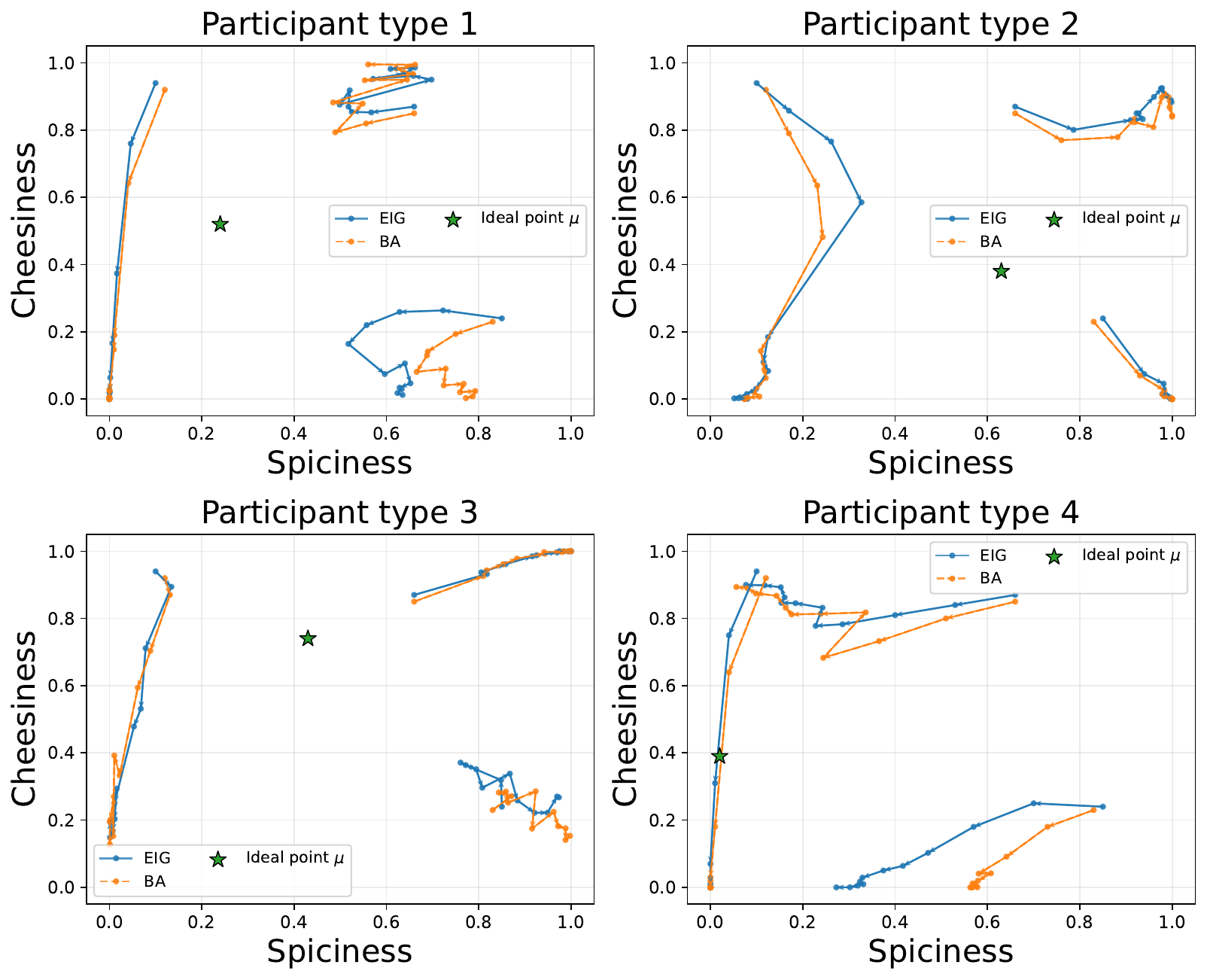}
    \caption{\small Representative design traces for the pizza choice example at manually selected parameter values.  For each value of $\theta$, we run the EIG and BA policies repeatedly and summarize the resulting stochastic traces by geometric medians for each option separately.  The blue arrow curves are representative traces from the EIG policy, and the orange arrow curves are representative traces from the BA policy.  Each panel contains three curves, corresponding to the three pizza options. Parameter values $(\beta,\gamma,\lambda)$ for Pizza customer participant type 1/2/3/4 (Top left/Top right/Bottom left/Bottom right): $(14.50,0.43,57.00)/(16.00,0.82,32.00)/(9.50,0.14,6.00)/(17.00,0.74,78.00)$.}
\label{fig design trace manual: pizza}
\end{figure}

Figures~\ref{fig design trace random: pizza} and
\ref{fig design trace manual: pizza} compare the representative pizza-design
traces produced by the EIG and BA policies. Each point in a trace is one pizza option design in the spiciness-cheesiness space, and the three separated
traces correspond to the adaptive design movement of the three pizza
options across the $T=12$ choice tasks. The ideal point $\mu$ is shown in
each panel, so the traces can be interpreted relative to the participant type being
simulated. The BA policy tends
to move pizza options attributes slightly differently from those of EIG policy. Usually one of the three pizza options would have its BA policy design closer to the corner. The BA policy changes the design behavior by emphasizing boundary choice sets, while
Figure~\ref{fig loss: pizza} shows that this is accompanied by a smaller
$\sqrt{S(\varphi)}$ and a similar value of $\overline Z(\varphi)$.

\begin{figure}[!htbp]
    \centering
    \includegraphics[width=0.55\linewidth]{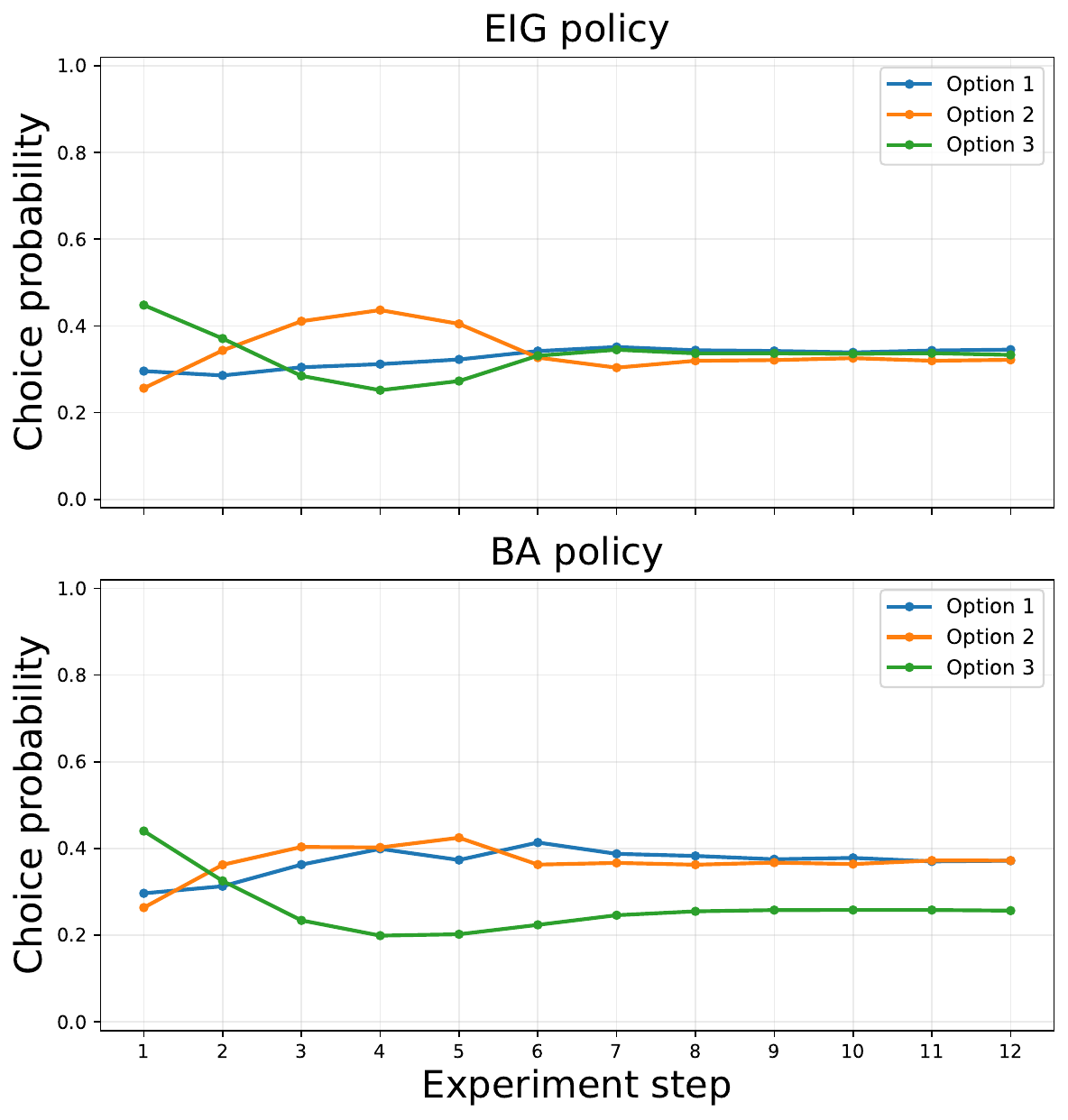}
    \caption{\small Average choice probabilities for the EIG and BA policies in the pizza choice example. For each parameter over a prior sample, evaluate the representative design traces and the corresponding choice probability at each experiment step. The averages of choice probabilities at each step are computed over the prior-sampled parameter values.}
\label{fig choice probability: pizza}
\end{figure}

\begin{table}[!htbp]
\centering
\small
\begin{tabular}{lcc}
\toprule
 & \makecell{Trained by EIG bound\\(Eq. \eqref{EIG-est12})} & \makecell{Trained by BA bound\\(Eq. \eqref{eq:L-alpha-hat})} \\
\midrule
\makecell{Freq. of\\domination of Option 1 } & 0.1703 & 0.1395 \\
\makecell{Freq. of\\domination of Option 2} & 0.1598 & 0.1343 \\
\makecell{Freq. of\\domination of Option 3} & 0.1773 & 0.3052 \\
\makecell{Freq. of having at\\least one dominated option} & 0.4182 & 0.4800 \\
\bottomrule
\end{tabular}
\caption{\small Frequency of attribute-wise dominated options in the pizza choice example, over prior-sampled parameter values and $T=12$ experimental steps. Option $j$ is counted as a dominated option when it is farther from the ideal point than another option on both spiciness and cheesiness. The last column gives the frequency with which at least one of the three options is dominated.}
\label{tab option decoy: pizza}
\end{table}

Figure~\ref{fig choice probability: pizza} and Table~\ref{tab option decoy: pizza} empirically illustrate the differences between the EIG and BA policies in terms of choice probabilities and choice contexts.
As shown in Figure~\ref{fig choice probability: pizza}, the EIG policy generates designs in which the alternatives become progressively more balanced in their attribute profiles and the probability of being chosen over the course of the experiment. This pattern reflects its objective of maximizing EIG, as it is generally more informative when multiple choice outcomes remain plausible and share almost the same probability of being chosen. In contrast, the BA policy tends to maintain one relatively dominated alternative over experiment steps, which has lower chosen probability. It is worth noting that including an inferior option in the experimental design does not necessarily reduce its informativeness in this case. Although the dominated option is less likely to be selected, its presence activates the attraction-effect term $\gamma A_j(\bm{x},\bm{\theta})$ to be non-zero in the utility function $V_j(\bm{x},\bm{\theta})$, thereby providing additional information for estimating the associated parameters $\gamma$ and $\lambda$. Conversely, if the experimental design does not deliberately include a dominated option, it may provide insufficient variation to identify the attraction-effect parameters, resulting in degenerate estimates such that $\hat{\gamma}=0$ and $\hat{\lambda}\to+\infty$. Hence, the BA policy suggests a mechanism that may improve identification of these parameters. 
Table~\ref{tab option decoy: pizza} further demonstrates the BA policy's advantages by showing that the frequency with which a dominated option appears across all tested experimental trajectories is 0.4800 under BA, compared with 0.4182 under EIG. As a result, EIG tends to generate relatively balanced alternatives, whereas BA deliberately includes a dominated alternative more frequently, achieving comparable mean information gain with lower variability in evidential performance.

\section{Discussion}

Our work has considered a design criterion based on the minimization of evidential bias, 
the bias against criterion, as an alternative to EIG.  While there is extensive existing research
on alternatives to EIG for batch design, they have been studied far less for amortized
sequential design.  Because the bias against is defined as a probability,
directly optimizing it is difficult, and so our strategy has been to control the bias against
by minimization of an upper bound based on the Paley-Zygmund inequality.  Optimization of
this bound can be formulated as maximization of a variance-penalized EIG criterion, which is
both convenient to optimize and interpretable.  Furthermore, the choice of penalty has a
natural interpretation as corresponding to controlling the bias against at different evidential
thresholds.  Amortized sequential design has been implemented using the deep adaptive design
approach of \citet{foster+imr21}, adapted to our new criterion.  In several examples, including
a complex discrete choice model, we are able to obtain designs which substantially improve on
EIG-based designs in terms of the bias against upper bound.

The main idea of our approach can be extended in several directions.
In the discussion of bias against in \citet[Section~4.6]{evans15}, a complementary concept
is also defined, termed the bias in favour.  While bias against is the probability of misleading
evidence for the truth, the bias in favour is the probability that evidence is produced against a
false value, with the false value chosen to differ from the truth by a practically significant amount.
An obvious topic for future work is to combine the bias against with the bias in favour
to develop design approaches that take into account which differences are practically significant
to detect in a given problem.

\section*{Acknowledgements}

David Nott's research was supported by the Ministry of Education, Singapore, under the Academic Research Fund Tier 2 (MOE-T2EP20123-0009).

\appendix

\section{Proof of Proposition~\ref{prop:PZ}}\label{app:PZ}

We reproduce the proof for completeness.  For $\delta\in[0,1]$, let $A=\{Z>\delta E(Z)\}$.  Writing 
$\mathbb{I}_A$ and $\mathbb{I}_{A^c}=1-\mathbb{I}_A$ for the corresponding 
indicators, 
\begin{align*}
  (1-\delta)E(Z) & = E(Z-\delta E(Z)) = E((Z-\delta E(Z))\mathbb{I}_A) + E((Z-\delta E(Z))\mathbb{I}_{A^c}) \\
  & \leq E((Z-\delta E(Z))\mathbb{I}_A),
\end{align*}
since $Z-\delta E(Z)\leq 0$ on $A^c$.  By the Cauchy--Schwarz inequality, 
\begin{align*}
  E((Z-\delta E(Z))\mathbb{I}_A) & \leq \sqrt{E((Z-\delta E(Z))^2)}\sqrt{P(A)}.
\end{align*}
Combining these and using $E(Z)\geq 0$ (so that $(1-\delta)E(Z)\geq 0$ and the 
inequality can be squared without changing direction), 
\begin{align*}
  (1-\delta)^2 E(Z)^2 & \leq E((Z-\delta E(Z))^2)\,P(A),
\end{align*}
Observing that $E((Z-\delta E(Z))^2)=\mathrm{Var}(Z)+(1-\delta)^2E(Z)^2$, it is
sufficient to have $\mathrm{Var}(Z)>0$ for this to be positive.  Then rearranging
the last inequality above gives \eqref{eq:PZ-lower}.  The complementary bound 
\eqref{eq:PZ-upper} follows since
\begin{align*}
  P(A^c) & = 1-P(A) \leq 1-\frac{(1-\delta)^2 E(Z)^2}{\mathrm{Var}(Z)+(1-\delta)^2 E(Z)^2}
  = \frac{\mathrm{Var}(Z)}{\mathrm{Var}(Z)+(1-\delta)^2 E(Z)^2}.
\end{align*}

\section{The bias-against bound and variance-penalized EIG}\label{app:Lagrangian}

Here we make precise the correspondence used in 
Section~\ref{sec:penalized}, namely that 
maximizing $R_t(x)$ is equivalent to maximizing the 
variance-penalized criterion at a suitable penalty.  
Throughout, write $\mu(x):=E(Z(y,x,\theta))=\text{EIG}(x)$ and 
$v(x):=\mathrm{Var}(Z(y,x,\theta))$ for a design $x$ in some class $\mathcal{X}$, 
and for a fixed constant $t$ define 
\begin{align*}
  R_t(x) := \frac{\mu(x)-t}{\sqrt{v(x)}}, \qquad 
  L_\alpha(x) := \mu(x)-\alpha\, v(x).
\end{align*}

The proof of Proposition 2 below 
is based on the quadratic transform for fractional programming 
\citep{shen+y18,shen+y25}, which for statisticians is naturally 
described using a minorization--maximization (MM) argument \citep{hunter+l04}.  
The minorization that is the basis of the proof derives from 
arguments in \citet[Section~VIII]{shen+y25}.  We introduce some 
notation to simplify the discussion.  
For $x$ with $v(x)>0$ write $\zeta(x):=R_t(x)^2=(\mu(x)-t)^2/v(x)$, and for a point $z$ with $v(z)>0$ define the surrogate 
\begin{align}
  g(x\mid z) & := 2\kappa(z)(\mu(x)-t)-\kappa(z)^2\,v(x), \qquad 
  \kappa(z):=\frac{\mu(z)-t}{v(z)}. \label{eq:mm-surrogate}
\end{align}
Since 
\begin{align}
  \zeta(x)-g(x\mid z) & = \frac{\big[(\mu(x)-t)-\kappa(z)\,v(x)\big]^2}{v(x)}\;\geq\; 0,
  \label{eq:minorize}
\end{align}
with equality at $x=z$, the surrogate minorizes $\zeta$ and is tangent to it at 
$z$.  An MM 
step for maximizing $\zeta$ is
$z\leftarrow\arg\max_x g(x\mid z)$, and since 
$g(x\mid z)=2\kappa(z)[L_{\kappa(z)/2}(x)-t]$ with $\kappa(z)>0$ for $\mu(z)>t$, the MM 
step maximizes the variance-penalized criterion 
$L_\alpha$ at $\alpha=\kappa(z)/2$.  Proposition~\ref{prop:Lagrangian} states that a 
maximizer of the ratio $R_t$ is a fixed point of this iteration, with the 
surrogate there being $L_{\alpha^*}$.  No concavity of $\mu$ or 
convexity of $v$ is required here, although that is often assumed in fractional
programming problems.  These assumptions are needed there to make 
the surrogate objective concave in the optimization steps and to guarantee
convergence to a global maximizer.  However, these assumptions are not necessary
for the fixed-point property that Proposition 2 establishes.  

To give some further intuition for Proposition 2 it can be helpful
to consider what happens when $\mu$ and $v$ are differentiable.  In that case, 
at an interior maximizer of $x^*$ of $R_t(x)$, 
$$0=\nabla_x R_t(x^*)=v(x^*)^{-1/2} \left(\nabla_x \mu(x^*)-\frac{\mu(x^*)-t}{2v(x^*)} \nabla_x v(x^*)\right),$$
and the term in brackets is the stationarity condition at $x^*$ for
the penalized criterion $L_\alpha(x)$ with 
$\alpha=\alpha^*=(\mu(x^*)-t)/(2v(x^*))$.  
The proof of Proposition 2 requires no differentiability assupmptions
about $\mu$ and $v$.  

\begin{proposition}\label{prop:Lagrangian}
Suppose $x^*\in\mathcal{X}$ maximizes $R_t$ over $\{x\in\mathcal{X}:v(x)>0\}$, 
and that $\mu(x^*)>t$ and $v(x^*)>0$.  Set 
\begin{align}
  \alpha^* & = \frac{\mu(x^*)-t}{2\,v(x^*)}. \label{eq:alpha-star-app}
\end{align}
Then $x^*$ maximizes $L_{\alpha^*}$ over $\{x\in\mathcal{X}:v(x)>0\}$, and 
\begin{align}
  \max_{x:\,v(x)>0} L_{\alpha^*}(x) & = L_{\alpha^*}(x^*) = t+\frac{\mu(x^*)-t}{2}=\frac{\mu(x^*)+t}{2}. \label{eq:opt-L-alpha-star}
\end{align}
\end{proposition}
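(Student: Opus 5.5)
The plan is to use the minorization identity \eqref{eq:minorize} with $z=x^*$, so that the fixed-point property follows from one inequality chain. First I record the quantities at $x^*$. Because $\mu(x^*)>t$ and $v(x^*)>0$, we have $\kappa(x^*)=(\mu(x^*)-t)/v(x^*)>0$ and $\alpha^*=\kappa(x^*)/2$. Using $g(x\mid z)=2\kappa(z)[L_{\kappa(z)/2}(x)-t]$, which is a direct expansion of \eqref{eq:mm-surrogate}, we get
\begin{align*}
 g(x\mid x^*)=2\kappa(x^*)\big[L_{\alpha^*}(x)-t\big]
\end{align*}
for every $x$ with $v(x)>0$. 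Maximizing $L_{\alpha^*}$ over this set is therefore equivalent to maximizing $g(\cdot\mid x^*)$, since the positive factor and the additive constant do not change maximizers.

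Next, for any $x$ with $v(x)>0$, I chain \eqref{eq:minorize}, the optimality of $x^*$, and tangency:
\begin{align*}
 g(x\mid x^*)\le \zeta(x)=R_t(x)^2\le R_t(x^*)^2=\zeta(x^*)=g(x^*\mid x^*).
\end{align*}
The middle inequality is the step that needs care, because $\zeta$ squares $R_t$. If $R_t(x)\ge 0$, then $R_t(x)\le R_t(x^*)$ together with $R_t(x^*)>0$ gives $R_t(x)^2\le R_t(x^*)^2$. If $R_t(x)<0$, squaring can reverse the order, so I bypass $\zeta$ in that case. Here $\mu(x)-t<0$ and $\kappa(x^*)>0$, so \eqref{eq:mm-surrogate} directly gives
\begin{align*}
 g(x\mid x^*)=2\kappa(x^*)(\mu(x)-t)-\kappa(x^*)^2v(x)<0<\zeta(x^*)=g(x^*\mid x^*).
\end{align*}
In both cases $g(x\mid x^*)\le g(x^*\mid x^*)$. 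This proves that $x^*$ maximizes $L_{\alpha^*}$.

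Finally I compute the optimal value:
\begin{align*}
 L_{\alpha^*}(x^*)=\mu(x^*)-\frac{\mu(x^*)-t}{2v(x^*)}\,v(x^*)=\mu(x^*)-\frac{\mu(x^*)-t}{2}.
\end{align*}
This can be rewritten as $t+(\mu(x^*)-t)/2=(\mu(x^*)+t)/2$, which is \eqref{eq:opt-L-alpha-star}. The sign split in the second step is the only real obstacle; everything else is algebra.
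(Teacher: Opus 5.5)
Your proof is correct and matches the paper's: you apply the minorization \eqref{eq:minorize} at $z=x^*$, use $g(x\mid x^*)=2\kappa(x^*)[L_{\alpha^*}(x)-t]$, and split on the sign of $\mu(x)-t$, which is exactly the paper's case split. The only difference is cosmetic. In the case $\mu(x)<t$ you show $g(x\mid x^*)<0<g(x^*\mid x^*)$, while the paper shows $L_{\alpha^*}(x)<\mu(x)\le t<L_{\alpha^*}(x^*)$ directly, and these are equivalent up to the positive factor $2\kappa(x^*)$.
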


\begin{proof}
Write $r^*:=R_t(x^*)=(\mu(x^*)-t)/\sqrt{v(x^*)}>0$ and 
$\kappa^*:=\kappa(x^*)=(\mu(x^*)-t)/v(x^*)=2\alpha^*>0$, and note that
\begin{align}
  L_{\alpha^*}(x^*) & = \mu(x^*)-\alpha^* v(x^*)
    = \mu(x^*)-\frac{\mu(x^*)-t}{2} = t+\frac{\mu(x^*)-t}{2},
  \label{eq:opt-value}
\end{align}
which is the same as \eqref{eq:opt-L-alpha-star}.  We show 
$L_{\alpha^*}(x)\leq L_{\alpha^*}(x^*)$ for every $x$ with $v(x)>0$.  
There are two cases.  

If $\mu(x)\leq t$, then since $\alpha^* v(x)>0$ (since $\mu(x^*)>t$ implies $\alpha^*>0$, and $v(x)>0)$ 
we have
\begin{align*}
  L_{\alpha^*}(x) & = \mu(x)-\alpha^* v(x) < \mu(x) \leq t 
  < t+\frac{\mu(x^*)-t}{2} = L_{\alpha^*}(x^*),
\end{align*}
using $\mu(x^*)>t$ and \eqref{eq:opt-value}.

If $\mu(x)>t$, then $R_t(x)>0$, and optimality of $x^*$ gives 
$0<R_t(x)\leq r^*$. Squaring both sides of the last inequality,
\begin{align}
  \zeta(x) & \leq (r^*)^2=\zeta(x^*). \label{eq:phi-opt}
\end{align}
Applying the minorization \eqref{eq:minorize} at $z=x^*$, and using 
\eqref{eq:phi-opt} and the tangency of the surrogate at $x^*$,
\begin{align*}
  4\alpha^*\big[L_{\alpha^*}(x)-t\big] & = g(x\mid x^*) 
  \;\leq\; \zeta(x) \;\leq\; \zeta(x^*) = g(x^*\mid x^*) 
  = 4\alpha^*\big[L_{\alpha^*}(x^*)-t\big],
\end{align*}
where the first and last equalities use 
$g(x\mid x^*)=2\kappa^*[L_{\kappa^*/2}(x)-t]=4\alpha^*[L_{\alpha^*}(x)-t]$ and 
$g(x^*\mid x^*)=\zeta(x^*)$ follows from \eqref{eq:minorize} at 
$x=z=x^*$.  Dividing by $4\alpha^*>0$ gives 
$L_{\alpha^*}(x)\leq L_{\alpha^*}(x^*)$.

Combining the two cases, $x^*$ maximizes $L_{\alpha^*}$ over 
$\{x\in\mathcal{X}:v(x)>0\}$, with optimal value \eqref{eq:opt-value}.
\end{proof}

\begin{remark}\label{rem:policy-form}
Proposition~\ref{prop:Lagrangian} is stated for optimization over a design 
$x$, but its proof uses no structure of the set $\mathcal{X}$ beyond the 
values of the two functionals $\mu$ and $v$.  The result therefore applies 
in the sequential adaptive setting of Section~\ref{sec:computation}, 
with $x$ replaced by the policy parameter $\varphi$, $\mathcal{X}$ by the 
policy parameter space, and 
$\mu(\varphi)=E\{Z(y,\varphi,\theta)\}$ and 
$v(\varphi)=\mathrm{Var}\{Z(y,\varphi,\theta)\}$ the mean and variance of 
the log relative belief ratio of the sequential experiment, provided a 
maximizer $\varphi^*$ of $R_t$ over the policy parameter space exists.  
\end{remark}

\section{Details of derivations in Example 1} \label{app:example1}

\subsection{Derivations of the inequalities \eqref{xi0condition} and \eqref{xi1condition} in Example 1}

Here we derive \eqref{xi0condition} and \eqref{xi1condition} in the main 
text.  

For \eqref{xi0condition}, we first write
$I=\frac{1}{2}\log(1+\tau^2/s^2)$, 
which is the EIG conditional on $\xi=0$, and $\eta=y/(s^2+\tau^2)^{1/2}$.  Using the
expression \eqref{Kexpression} we can then write
\begin{align*}
    P(K>0|\xi=1) & = P\left(\frac{(y-\theta)^2}{s^2}<2I+\eta^2\Bigm| \xi=1\right).
\end{align*}
Given $\xi=1$, $y$ and $\theta$ are independent, with $\theta|\xi=1\sim N(0,\tau^2)$ and $y|\xi=1\sim N(0,\sigma_B^2)$, and so
\begin{align*}
  P(K>0|\xi=1) & = E_{y\sim p(y|\xi=1)}\left(P\left(\frac{(y-\theta)^2}{s^2}<2I+\eta^2\Bigm| y, \xi=1\right)\right),
\end{align*}
where $(y-\theta)/s|y,\xi=1\sim N(y/s,\tau^2/s^2)$, with this normal density being
upper bounded by $s (2\pi \tau^2)^{-1/2}$.  Multiplying the upper bound by the
width of the interval $(-(2I+\eta^2)^{1/2},(2I+\eta^2)^{1/2})$, we obtain
a bound on the probability
\begin{align*}
    P(K>0|\xi=1) & \leq E_{y\sim p(y|\xi=1)}\left(2s(2\pi\tau^2)^{-1/2}(2I+\eta^2)^{1/2}\right) \\
    & \leq E_{y\sim p(y|\xi=1)}\left(2s(2\pi\tau^2)^{-1/2}((2I)^{1/2}+|\eta|)\right), \\
    & \leq Cs(\log n)^{1/2}
\end{align*}
for some constant $C$ depending only on $\tau$ and $\sigma_B$, 
where in the second to last line we have used $(2I+\eta^2)^{1/2}\leq ((2I)^{1/2}+|\eta|)$ and in the last line we have used 
$y|\xi=1\sim N(0,\sigma_B^2)$ and $I=1/2\log n+O(1)$.  Hence
$P(K\leq 0|\xi=1)\geq 1-Cs(\log n)^{1/2}$.

For \eqref{xi1condition}, we have 
\begin{align*}
    P(K\leq 0|\xi=0) & \leq P\left( \log\left(1+\frac{\tau^2}{s^2}\right)\leq W^2\Bigm| \xi=0\right),
\end{align*}
where $W=(y-\theta)/s\sim N(0,1)$ when $\xi=0$.  Now $W^2|\xi=0$ is $\chi^2_1$ with 
moment generating function $M(t)=(1-2t)^{-1/2}$ for $|t|<0.5$.  
Applying a Chernoff bound with $t=1/4$ and noting that $M(1/4)=\sqrt{2}$ gives 
\begin{align*}
    P(K\leq 0 |\xi=0) \leq 
    & \sqrt{2}\left(1+\frac{\tau^2}{s^2}\right)^{-1/4}.
\end{align*}
Noting that 
$\left(1+\frac{\tau^2}{s^2}\right)^{-1/2}=\left(\frac{s^2}{s^2+\tau^2}\right)^{1/2}\leq \left(\frac{s^2}{\tau^2}\right)^{1/2}=s/\tau,$
we see that
\begin{align*}
  P(K\leq 0 |\xi=0)\leq \sqrt{2}(s/\tau)^{1/2},
\end{align*}
which demonstrates \eqref{xi1condition}.  

\subsection{Derivation of the inequality \eqref{EIG-inequality} in Example 1}
 
We need some notation first.  Let $X$, $Y$ and $Z$ be random variables.  
The mutual information of $X$ and $Y$ is
$$I(X;Y):=\KL{p(X,Y)}{p(X)p(Y)}.$$
The EIG is $I(\theta;y)$ which follows directly from \eqref{EIG1}.  
The conditional mutual information $I(X;Y|Z)$ is
$$I(X;Y|Z):=E_{Z\sim p(Z)}\left(\KL{p(X,Y|Z)}{P(X|Z)p(Y|Z)}\right).$$
The chain rule for mutual information implies that
\begin{align*}
  I(\theta;y,\xi)=I(\theta;y)+I(\theta;\xi|y),
\end{align*}
which upon rearranging and using the property that mutual information and 
conditional mutual information are non-negative implies
\begin{align}
  I(\theta;y,\xi)-I(\theta;\xi|y)= I(\theta;y)=\mathrm{EIG}\leq I(\theta;y,\xi). \label{EIG-inequality-prelim} 
\end{align}
Now, 
\begin{align}
  I(\theta;y,\xi) & =\KL{p(\theta,y,\xi)}{p(\theta)p(y,\xi)} \nonumber \\
  & = E_{\xi\sim p(\xi)}\left(\KL{p(\theta,y|\xi)}{p(\theta)p(y|\xi)}\right) \nonumber \\
  & = (1-q)\KL{p(\theta,y|\xi=0)}{p(\theta)p(y|\xi=0)}+q\KL{p(\theta,y|\xi=1)}{p(\theta)p(y|\xi=1)}. \label{I-theta-y-xi}
\end{align}
Since $\theta$ and $y$ are independent given $\xi=1$, 
the second term in the last line above is zero.  The first term 
in \eqref{I-theta-y-xi} is the conditional mutual information between $\theta$ and $y$ given $\xi=0$ (i.e. the $\mathrm{EIG}$
for the experiment where $\xi=0$), the
normal location problem, and this can be computed analytically using conjugacy
so that we have
\begin{align}
  I(\theta;y,\xi)=\frac{1-q}{2}\log\left(1+\frac{\tau^2}{s^2}\right). \label{term1}
\end{align}

Next, consider $I(\theta;\xi|y)$.  We have
\begin{align}
    I(\theta;\xi|y)& = E_{y\sim p(y)}\left(\KL{p(\theta,\xi|y)}{p(\theta|y)p(\xi|y)}\right) \nonumber \\
    & = E_{(y,\theta)\sim p(y,\theta)}\left(E_{\xi\sim p(\xi|y,\theta)}(\log p(\xi|y,\theta)\right)-E_{y\sim p(y)}\left(E_{\xi\sim p(\xi|y)}(\log p(\xi|y))\right) \nonumber 
\end{align}
Recall from Section~2.3 the notation for the Shannon or differential 
entropy of a random
variable $X$ as
$H(X)=-E_{X\sim p(X)}(\log p(X))$.  If $X$ is binary with
$P(X=1)=p$, then $H(X)=-p\log p -(1-p)\log (1-p)$, and abusing notation
we also write $H(p):=H(X)$.    
Then we have
\begin{align}
    I(\theta;\xi|y)
    & = E_{y\sim p(y)}(H(\xi|y))-E_{(y,\theta)\sim p(y,\theta)}(H(\xi|y,\theta))
     \nonumber \\
    & \leq E_{y\sim p(y)} (H(\xi|y)),   
\end{align}
where in the last line we have used the fact that Shannon entropy
for a discrete variable is non-negative.  
The function $H(\cdot)$ is concave, so by Jensen's inequality
\begin{align}
  I(\theta;\xi|y) & \leq H(E_{y\sim p(y)}(P(\xi=1|y)) \nonumber \\
  & = H(q). \label{term2}
\end{align}
Hence from \eqref{EIG-inequality-prelim}, \eqref{term1} and \eqref{term2} we obtain
\begin{align*}
    \frac{1-q}{2}\log\left(1+\frac{\tau^2}{s^2}\right)-H(q)\leq 
    \mathrm{EIG}\leq \frac{1-q}{2}\log\left(1+\frac{\tau^2}{s^2}\right),
\end{align*}
and recalling that $s^2=\sigma^2/n$, we see that
$\mathrm{EIG}=\frac{1-q}{2}\log n + O(1)$.  This demonstrates that $\mathrm{EIG}\rightarrow\infty$. 

\section{Differentiable relaxations for discrete examples} \label{app:gumbel}

The hyperbolic-discounting and pizza choice examples both involve discrete
simulated outcomes.  This means that the reparametrization approach we have
used to construct differentiable approximations to the design objective
breaks down, since the simulated data is not a differentiable function 
of the policy parameters and noise.  
This occurs, for example, if the response is sampled as an exact Bernoulli or categorical variable. To obtain a differentiable approximation of the
training objective, we use the so-called Gumbel-softmax trick
\citep{jang2016categorical,maddison2016concrete} to relax discrete
distributions to continuous approximations.

An exact sample from a categorical distribution on $\{1,\dots, K\}$
with respective probabilities $p_1,\ldots,p_K$ can be obtained  
by first drawing independent standard Gumbel variables
\[
g_j=-\log\{-\log u_j\},\qquad u_j\sim \mathrm{Uniform}(0,1),
\]
$j=1,\dots, K$, and then setting
\[
y=\arg\max_{1\leq j\leq K}\{\log p_j+g_j\}.
\]
Below we write $e_y=(e_{y1},\dots, e_{yK})$ for the one-hot encoding of $y$ where
$e_{yj}=1$ if $y=j$ and $e_{yj}=0$ otherwise.  

The Gumbel-softmax relaxation of the distribution of $e_y$ 
replaces the random variable $e_{yj}$ with the continuous
random variable
\[
\widetilde y_j
=
\frac{\exp\{(\log p_j+g_j)/\tau\}}
{\sum_{\ell=1}^K \exp\{(\log p_\ell+g_\ell)/\tau\}},
\qquad j=1,\ldots,K.
\]
where $\tau>0$ is a temperature parameter.  
The distribution of $\widetilde{y}=(\widetilde{y}_1,\dots, \widetilde{y}_K)$
approximates the distribution of $e_y$, with a draw of $\widetilde{y}$ 
becoming closer to a draw of $e_y$ as $\tau\rightarrow 0$, and larger
values of $\tau$ giving a smoother density for $\widetilde{y}$.  
In training, samples of $\widetilde{y}$ are used to approximate the design
objective, and passed
to the policy network input-history.  

We use the Gumbel-softmax draw as a differentiable version of the
simulated categorical response.  The usual categorical log likelihood  is
\[
\log p_y=\sum_{j=1}^K e_{yj}\log p_j .
\]
Since the continuous simulator replaces $e_y$ by
$\widetilde y$, the differentiable log-likelihood approximation is
\[
\sum_{j=1}^K \widetilde y_j \log p_j.
\]

In the hyperbolic-discounting example of subsection \ref{subsec:hyperbolic} the response is binary.  For the binary response, let
$\psi(x,\theta)$ denote the model probability of a $1$ at
design $x$.  During training we replace the Bernoulli draw by
\[
\widetilde y \in (0,1),
\]
where abusing notation $\widetilde{y}$ is what we have called
$\widetilde{y}_2$ for the case of a categorical distribution on $\{0,1\}$ 
with $K=2$ in our previous discussion.  The log likelihood
term in the objective is approximated by
\[
\widetilde y \log \psi(x,\theta)
+
(1-\widetilde y)\log\{1-\psi(x,\theta)\}.
\]
Thus the simulator used in training is differentiable with respect to the
design through $\psi(x,\theta)$, and the usual reparameterized EIG and BA
objectives can be optimized by backpropagation.

In the pizza choice example of subsection \ref{subsec:discrete-choice}, the response is a three-option categorical choice. Let
\[
p_j(x,\theta)
=
P(y=j\mid x,\theta),
\qquad j=1,2,3,
\]
be the choice probabilities for the three pizza options.  During training we
draw $\widetilde y=(\widetilde y_1,\widetilde y_2,
\widetilde y_3)$ using
\[
\widetilde y_j
=
\frac{\exp\{(\log p_j(x,\theta)+g_j)/\tau\}}
{\sum_{\ell=1}^3 \exp\{(\log p_\ell(x,\theta)+g_\ell)/\tau\}}.
\]
The log-likelihood contribution is then approximated by
\[
\sum_{j=1}^3 \widetilde y_j \log p_j(x,\theta).
\]
which is the differentiable analogue of the usual categorical log likelihood.

The Gumbel-softmax relaxation is used only for training the policies.  For final
design-trace evaluation and interpretation of the learned policies, we return to
the intended discrete distributions, using Bernoulli responses in the hyperbolic example
and discrete categorical responses in the pizza choice example. For the hyperbolic-discounting example and pizza choice examples, we use $\tau=0.2$ and $0.5$ respectively in the Gumbel-softmax distribution. 

Especially for the pizza case study, two additional steps for smoothness and differentiability are required. First, for policy net training, the data simulator uses a soft domination function $D_{jk}^{\mathrm{soft}}(x,\theta)$  
to ensure differentiability and smoothness. Specifically,  an independent standard sigmoid function $\sigma(.)$ is applied on each attribute dimension such that
\[
D_{jk}^{\mathrm{soft}}(x,\theta)
=
\prod_{a=1}^2
\sigma\!\left(
c\{(x_{ka}-\mu_a)^2-(x_{ja}-\mu_a)^2\}
\right),
\]
where the scale parameter $c>0$ controls the sharpness with which the dominance between the compared options is perceived.  The soft dominance function $D_{jk}^{\mathrm{soft}}(x,\theta)$ constructs a smooth and differentiable data simulator for BOED, which converges to the strict dominance indicator as $c\to\infty$. Therefore, we fix a sufficiently large $c$, $c=30$, in the discrete choice experiment. 

Lastly, after policy net training, differentiability is also required in the mapping
from the policy-network output to the design space. We defined the design
space as the closed box $[0,1]^6$, while $0$ means no spiciness or cheesiness and $1$ means extreme spiciness or cheesiness. If this constraint is imposed by clipping
the raw policy output, then any coordinate clipped to $0$ or $1$ has zero local
gradient through the clipping operation.  To avoid this during training, the
policy first outputs an unconstrained raw vector
\[
z_t=\eta_\varphi({\cal I}_{t-1})\in\mathbb R^6 .
\]
This vector is arranged in a centre-and-offset form,
\[
\widetilde x_t =
\left(
z_{t1},
z_{t2},
z_{t1}+r_{\mathrm{loc}}z_{t3},
z_{t2}+r_{\mathrm{loc}}z_{t4},
z_{t1}+r_{\mathrm{loc}}z_{t5},
z_{t2}+r_{\mathrm{loc}}z_{t6}
\right),
\]
where $r_{\mathrm{loc}}=0.65$ controls the spread of the second and third pizza
options around the first.  The training design is then
\[
x_t=\sigma(\widetilde x_t/\tau_x)\in(0,1)^6,
\]
where $\tau_x=3$.  This gives a smooth
differentiable map from $\eta_\varphi({\cal I}_{t-1})$ to the design used in
the relaxed simulator. The centre-and-offset form is a policy parametrization,
not a change to the choice model. For design-trace evaluation, we
round the resulting continuous designs $x_t$ to a grid with spacing $0.01$
so that the reported attributes lie on interpretable levels in the original 
closed design space.

\section{Training and Evaluation procedure} \label{app: procedure hyper}

In this section we describe the training procedure, evaluation procedure, and
hyperparameters used for the three examples in the main manuscript.
For each example, the goal is to learn a policy network
$\eta_\varphi$, where $\varphi$ denotes the trainable network parameters.
We use the Adam optimizer with initial learning rate $lr$, and write
$\beta$ for the two Adam coefficients used to compute running averages of the
gradient and its square.  Similar to \cite{foster+imr21}, we use an
ExponentialLR scheduler with multiplicative learning-rate decay factor
$\gamma$.

For the policy network, we use the set-equivariant architecture of
\cite{foster+imr21}.  At step $t$, the network takes the previous history
$\{(x_s,y_s):s<t\}$ as input.  Each past design-outcome pair is first mapped
to an embedding of dimension $d_e$ by an encoder network with $h_e$ hidden
layers and hidden width $d_h$.  The embeddings are then pooled over the
history and passed to an emitter network with $h_m$ hidden layers and hidden
width $d_h$, which outputs the next design.  

For the source-location example, EIG and BA policies are trained from random
initialization for $N_{train}$ epochs.  At each epoch, a Monte Carlo estimate of
the relevant objective is formed using batch size $\mathcal{B}$, contrastive sample size
$L$, and outer Monte Carlo sample size $J$.  The Monte Carlo variables entering
the training objective are resampled at every epoch.

For the hyperbolic-discounting and pizza choice examples, we use a common
warm-start and fine-tune training strategy.  We first train a warm start policy net with objective \eqref{EIG-est12} for $N_{warm}=8000$ epochs.
This policy is then copied and used as the initialization for an additional fine-tune
$N_{cont}=2000$ epochs of objective-specific training.  One copy continues to
optimize the EIG lower-bound objective \eqref{EIG-est12}; this gives the
reported EIG policy.  The other copies optimize the BA objective
\eqref{eq:L-alpha-hat} for different values of $\alpha$; these give the
reported BA policies.  Thus, in these examples, EIG and BA policies are
compared after the same warm start and the same number of fine-tune epochs,
so that the difference is only the training objective in fine-tuning.  Similar 
warm-start and
fine-tuning strategies are widely used in neural-network optimization \citep{erhan2010does}. In both the warm-up and fine-tuning stages, we use windowed model checkpointing with window
length $l_w$: within each window of $l_w$ epochs, we store the model parameters
that attain the lowest training loss, and at the end of the window we reload
these parameters before continuing training.

After training, we evaluate the learned policies using batch size $\mathcal{B}_{eval}$,
contrastive sample size $L_{eval}$, and outer Monte Carlo sample size
$J_{eval}$.  These evaluations give the estimates
$\overline Z(\varphi)$, $\sqrt{S(\varphi)}$, and
\[
\widehat R(\varphi)=\frac{\overline Z(\varphi)}{\sqrt{S(\varphi)}} ,
\]
which are shown in Figures~\ref{fig loss: loc}, \ref{fig loss: hyper}, and
\ref{fig loss: pizza}.  For the design-trace figures, for each fixed parameter value we run the policy
$tr_{repeat}$ times and obtain a sample of stochastic design traces.  We treat
each trace as an ordered sequence.  If
$X^{(m)}=(x_1^{(m)},\ldots,x_T^{(m)})$ denotes the $m$th trace, we stack it as
$v^{(m)}=\operatorname{vec}(X^{(m)})$ and summarize the repeated traces by the
geometric median
\[
\widehat v
=
\arg\min_v
\sum_{m=1}^{tr_{repeat}} \|v-v^{(m)}\|_2 .
\]
The vector $\widehat v$ is reshaped to obtain the representative design trace,
and the geometric median is computed by Weiszfeld iteration
\citep{weiszfeld37} for $tr_{iter}$ iterations.  We repeat this over a prior
sample of size $tr_{sample}$, compute the DTW distance between the
representative EIG and BA traces \citep{JSSv031i07}, and display traces for
selected distance quantiles together with manually chosen parameter values.

For the pizza choice example, the geometric median is computed separately for
each of the three option trajectories, and the policy distance is the sum of
the DTW distances over the three option trajectories.

\begin{table}[!htbp]
\centering
\begin{tabular}{lc}
\toprule
Hyperparameter & Value \\
\midrule
Training epochs $N_{train}$ & $10000$ \\
Experiment steps $T$ & $30$ \\
Batch size $\mathcal{B}$ & $5000$ \\
Contrastive sample size $L$ & $100$ \\
Outer Monte Carlo sample size $J$ & $5000$ \\
Learning rate $lr$ & $5\times 10^{-5}$ \\
Adam coefficients $\beta$ & $(0.8,0.998)$ \\
Scheduler decay factor $\gamma$ & $0.98$ \\
Gradient clipping & max norm $1.0$ \\
Evaluation batch size $\mathcal{B}_{eval}$ & $5000$ \\
Evaluation contrastive size $L_{eval}$ & $100$ \\
Evaluation outer size $J_{eval}$ & $5000$ \\
Trace prior sample size $tr_{sample}$ & $500$ \\
Trace repeats $tr_{repeat}$ & $30$ \\
Geometric-median iterations $tr_{iter}$ & $100$ \\
Embedding dimension $d_e$ & $16$ \\
Hidden width $d_h$ & $128$ \\
Encoder hidden depth $h_e$ & $1$ \\
Emitter hidden depth $h_m$ & $0$ \\

\bottomrule
\end{tabular}
\caption{\small Hyperparameter settings used for the source-location example.}
\label{tab:hyper_loc}
\end{table}

\begin{table}[!htbp]
\centering
\begin{tabular}{lc}
\toprule
Hyperparameter & Value \\
\midrule
Training epochs $N_{warm}$ & $8000$ \\
Fine-tuning epochs $N_{cont}$ & $2000$ \\
Experiment steps $T$ & $20$ \\
Batch size $\mathcal{B}$ & $2000$ \\
Contrastive sample size $L$ & $1000$ \\
Outer Monte Carlo sample size $J$ & $400$ \\
Learning rate $lr$ & $5\times 10^{-6}$ \\
Adam coefficients $\beta$ & $(0.9,0.999)$ \\
Scheduler decay factor $\gamma$ & $0.98$ \\
Window length $l_w$ & $500$ \\
Evaluation batch size $\mathcal{B}_{eval}$ & $2000$ \\
Evaluation contrastive size $L_{eval}$ & $5000$ \\
Evaluation outer size $J_{eval}$ & $2000$ \\
Trace prior sample size $tr_{sample}$ & $500$ \\
Trace repeats $tr_{repeat}$ & $30$ \\
Geometric-median iterations $tr_{iter}$ & $100$ \\
Embedding dimension $d_e$ & $16$ \\
Hidden width $d_h$ & $256$ \\
Encoder hidden depth $h_e$ & $2$ \\
Emitter hidden depth $h_m$ & $2$ \\
\bottomrule
\end{tabular}
\caption{\small Hyperparameter settings used for the hyperbolic-discounting example.}
\label{tab:hyper_hyperbolic}
\end{table}

\begin{table}[!htbp]
\centering
\begin{tabular}{lc}
\toprule
Hyperparameter & Value \\
\midrule
Training epochs $N_{warm}$ & $8000$ \\
Fine-tuning epochs $N_{cont}$ & $2000$ \\
Experiment steps $T$ & $12$ \\
Batch size $\mathcal{B}$ & $500$ \\
Contrastive sample size $L$ & $500$ \\
Outer Monte Carlo sample size $J$ & $500$ \\
Learning rate $lr$ & $5\times 10^{-5}$ \\
Adam coefficients $\beta$ & $(0.8,0.998)$ \\
Scheduler decay factor $\gamma$ & $0.98$ \\
Window length $l_w$ & $1000$ \\
Evaluation batch size $\mathcal{B}_{eval}$ & $2000$ \\
Evaluation contrastive size $L_{eval}$ & $1000$ \\
Evaluation outer size $J_{eval}$ & $4000$ \\
Trace prior sample size $tr_{sample}$ & $500$ \\
Trace repeats $tr_{repeat}$ & $30$ \\
Geometric-median iterations $tr_{iter}$ & $100$ \\
Embedding dimension $d_e$ & $64$ \\
Hidden width $d_h$ & $256$ \\
Encoder hidden depth $h_e$ & $2$ \\
Emitter hidden depth $h_m$ & $2$ \\

\bottomrule
\end{tabular}
\caption{\small Hyperparameter settings used for the pizza choice example.}
\label{tab:hyper_pizza}
\end{table}

\subsection{Monte Carlo robustness check for the pizza-choice example}

\begin{sidewaysfigure}[!htbp]
    \centering
    \includegraphics[width=1.05\linewidth]
        {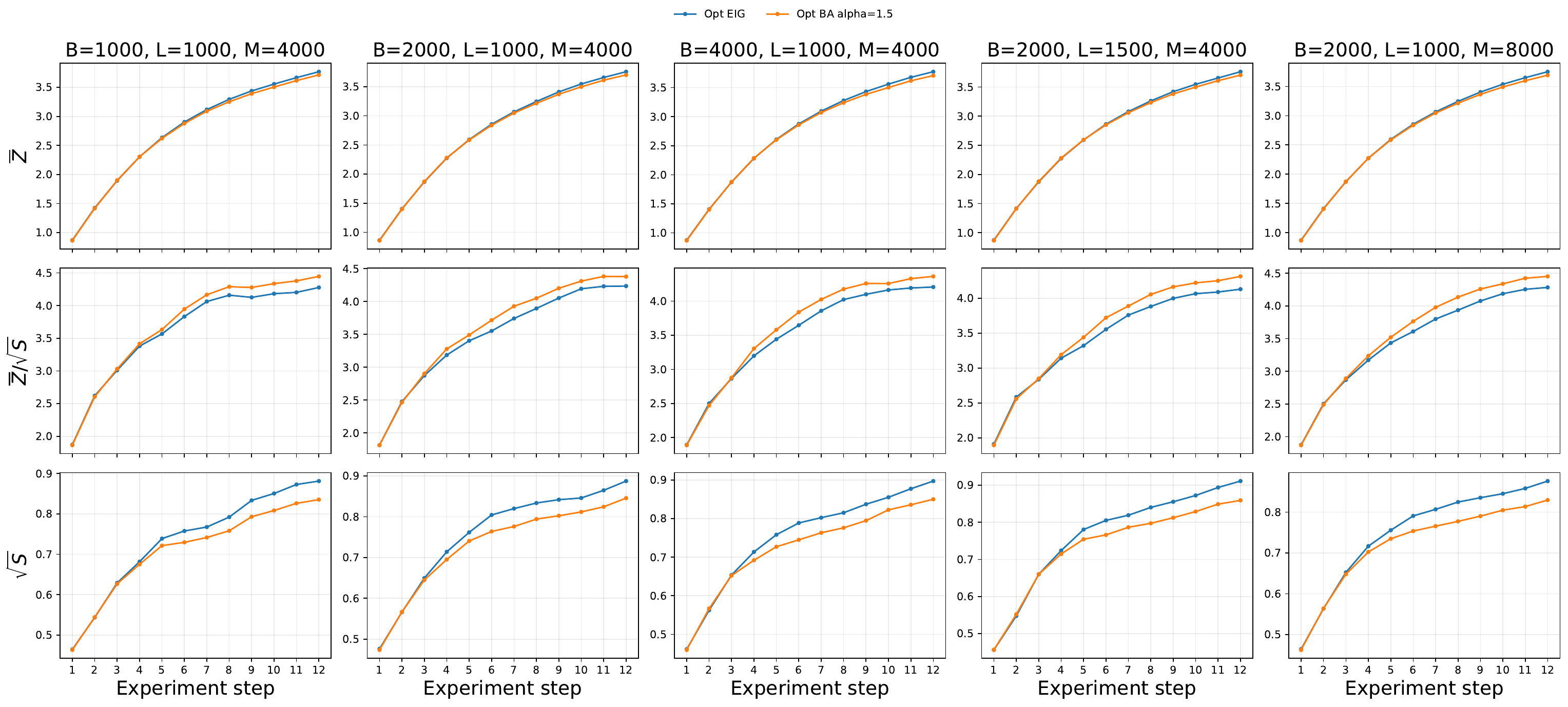}
    \caption{\small
    Monte Carlo robustness check for the pizza-choice example, comparing
    the EIG policy with the selected BA policy ($\alpha=1.5$).
    From top to bottom, the rows show $\overline Z$,
    $\overline Z/\sqrt{S}$, and $\sqrt{S}$ over the $T=12$
    experimental steps. The columns correspond to different combinations
    of $(\mathcal{B}_{\mathrm{eval}},L_{\mathrm{eval}},
    J_{\mathrm{eval}})$, as indicated above the panels.
    The qualitative comparison between the policies remains stable across
    the evaluation settings.}
    \label{fig:robustness_pizza}
\end{sidewaysfigure}

A natural concern is whether the results are sensitive to the choice of hyperparameters.  
To assess whether the comparison between the EIG and BA policies is
sensitive to Monte Carlo sample sizes, we repeated the pizza-choice objective
evaluation for several combinations of the evaluation batch size
$\mathcal{B}_{\mathrm{eval}}$, contrastive sample size $L_{\mathrm{eval}}$, and
outer Monte Carlo sample size $J_{\mathrm{eval}}$.
Figure~\ref{fig:robustness_pizza} shows that the qualitative comparison is
stable across these settings: the two policies achieve similar values of
$\overline{Z}$, while the BA policy generally has smaller $\sqrt{S}$ and
consequently larger $\overline{Z}/\sqrt{S}$.  The comparison in
Figure~\ref{fig loss: pizza} therefore does not appear to be driven by the
finite-sample Monte Carlo approximation.

\bibliographystyle{apalike}
\addcontentsline{toc}{section}{\refname}
\bibliography{Deep-adaptive-design-bias}

\end{document}